\newtheorem{mydef}{Definition}
\newtheorem{lem}{Lemma} 
\newtheorem{theo}{Theorem} 
\begin{document}
\linespread{1}
%

\title{The Chief Security Officer Problem}

\author{\IEEEauthorblockN{Kamesh Namuduri$^{1}$, Li Li$^{1}$, Mahadevan Gomathisankaran$^{2}$, Murali Varanasi $^{1}$\\}
\IEEEauthorblockA{$^{1}$Dept. of Electrical Engineering\\
$^{2}$Dept. of Computer Science and Engineering\\
University of North Texas\\
kamesh.namuduri@unt.edu, lili@my.unt.edu, mgomathi@unt.edu, murali.varanasi@unt.edu}
}

\maketitle
\IEEEpeerreviewmaketitle

\begin{abstract}

This paper presents the chief security officer (CSO) problem, defines its scope, and investigates several important research questions related within the scope. The CSO problem is defined based on the concept of secrecy capacity of wireless communication channels. It is also related to the chief Estimation/Executive Officer (CEO) problem that has been well studied in information theory.
The CSO problem consists of a CSO, several
agents capable of having two-way communication with the CSO, and a group of eavesdroppers. There are two scenarios in the CSO problem; one in which agents are not allowed to cooperate with one another and the other in which agents are allowed to cooperate with one another. While there are several research questions relevant to the CSO problem,
this paper focusses on the following and provides answers:  (1) How much information can be exchanged back and forth between the CSO and the agents without leaking any information to the eavesdroppers? (2) What is the power allocation strategy that the CSO needs to follow so as to maximize the secrecy capacity?  (3) How can agents cooperate with one another in order to increase the overall secrecy capacity?. 
\end{abstract}


\section{Introduction}

The Chief Security Officer (CSO) problem consists of a CSO, several
agents capable of having two-way communication with the CSO, and a group of eavesdroppers \footnote{The problem was originally presented in a 45th Annual Conference on Information Sciences and Systems (CISS), 2011.}\cite{namuduri}. The objective is to estimate the secrecy capacity of the CSO channel which refers to the maximum achievable rate of information transfer between the CSO and the agents without leaking any of the information to the eavesdropper.

The CSO problem is mainly based on the secrecy capacity of wireless communication channels that has been investigated over the past several years
\cite{wyner1,korner1,cheong1,bergmans1,csaiszar1,barros1,liang1}.
In a standard Alice-Bob-Eve model, secrecy capacity is defined as the maximum amount of
information that Alice can convey to Bob while keeping it
secret from Eve, the eavesdropper. Secrecy capacities of different
types of wireless channels including noisy, fading, and multiple input and
multiple output (MIMO) channels have been investigated over the past few years
\cite{gopala1,negi1,Shafiee1,zhangli,oggier1,thangaraj1}.

The CSO problem is also related to the Chief Estimation/Executive Officer (CEO) problem \cite{Berger1} which seeks to find tradeoffs between the sum rate of information transfer and the total distortion
between the true and estimated values of the source. In the CEO problem, the source is represented as a random vector, the observations are assumed to be noisy, and the communication
channels between the agents and the CEO are assumed to be noise-free. The CEO receives
observations from a number of agents. An important aspect of the CEO problem is that the \emph{agents are not allowed to communicate among
themselves}. The CEO problem has been
used to model information and decision fusion concepts
in sensor networks (see for example, \cite{Razi2011}). Characterization of rate-distortion region in distributed source coding has been investigated through CEO problem and its variants such as the multi-terminal source coding \cite{Oohama1997,Oohama2005}. Specific cases within multi-terminal source coding, for example, the quadratic Gaussian multi-terminal source coding \cite{Wagner2008}) have also been investigated in the literature. 

The CEO  problem is related to source coding, whereas, the CSO problem is related to channel coding. The CEO problem is discussed here only to bring out duality characteristics that typically exist between source coding and channel coding. Given that the CEO problem has been investigated in the literature over two decades, its analysis helps us gain insights into the CSO problem. Similarly, the new results that are discussed in our work may provide new insights into the CEO problem. Detailed analysis of these duality characteristics is beyond the scope of this paper and will be reported in our future work.

\subsection{The CSO Problem}

The CSO problem consists of a CSO, a group of \textit{N} agents working
for the CSO gathering information, and another group of \textit{M} eavesdroppers capable of listening to all conversations as shown in Fig. \ref{fig1}. One can consider two scenarios: In the first scenario, the
agents are \emph{not} \emph{allowed} to communicate with one another. In the second scenario, the agents are \emph{allowed to communicate} with one another.

The CSO problem includes two types of channels: the downlink broadcast channel (BC), from CSO to agents, as shown in Fig. \ref{ftwo}, and the uplink multiple access channel (MAC), from agents to CSO, as shown in Fig. \ref{fthree}. In both channels, the messages are denoted by $W=W_{[1,N]}$, the input codewords denoted by $X=X_{[1,N]}$  and the output of the channel is denoted by $Y=Y_{[1,N]}$. The information collected by the eavesdroppers is denoted by $Z=Z_{[1,N]}$. Subscripts [$1 \ldots, N$] are used to represent the users and superscripts [$1, \ldots, n$] are used to represent the number of symbols transmitted or  number of times the channel is used. This common notation is followed for both channels to bring out the duality between them. 

We consider security under information theory paradigm and define secrecy capacity of a channel in terms of the channel transition probability,  probability of error at the legitimate receiver, and the equivocation rate for the eavesdropper \cite{cover2,liang1}. Extending the concept of perfect secrecy as discussed in \cite{liang1}, we consider perfect secrecy when the eavesdropper does not obtain any information about the messages transmitted on the legitimate channel. This happens when the equivocation rate for the eavesdropper is  as much as the maximum rate of transmission on the legitimate channel.

\subsubsection{Downlink Broadcast Channel}
Following the notation used  in \cite{cover2}, the CSO downlink BC channel consists of an input alphabet $\mathcal{X}$, $N$ output alphabets $\mathcal{Y}_1, \mathcal{Y}_2, \ldots, \mathcal{Y}_N$ and a probability transition function defined in Eq. (\ref{eq1}). The downlink BC channel can be viewed as a set of parallel independent Alice-Bob-Eve secrecy channels, one of which is shown in  Fig. \ref{fone}.  In each of these parallel channels, Alice, the CSO, sends $X^{n}$, a sequence of  $n$ symbols, to Bob, the agent. Bob receives a noisy version $Y^{n}$ of $X^{n}$,  while Eve, the eavesdropper, receives another noisy version $Z^{n}$.

The CSO transmits $X^{n}$ to multiple agents on the BC channel. Fig. \ref{ftwo} illustrates the CSO downlink BC channel with two agents. Each agent receives a noisy version of $X^{n}$, which is $Y^{n}_{i}$, $i = 1,2$. The agents transmit their information, $X^{n}_{i}$, $i = 1,2$, to the CSO, on the uplink MAC, as shown in Fig. \ref{fthree}.  In both cases, multiple Eves (eavesdroppers) listen on all channels between the CSO and the agents.

%
The channels are assumed to be independent of each other such that the transition probability distribution function (PDF) of the CSO downlink BC channel is:
\begin{equation}
p(y_{[1,N]}, z_{[1,N]} | x) =
\prod_{i=1}^{N}p(y_i,z_i|x),
\label{eq1}
\end{equation}
where $x_i \in \mathcal{X}_i$, $y_i \in \mathcal{Y}_i$, $z_i \in
\mathcal{Z}_i$, $i = 1, \ldots, N$, and $\mathcal{X}_i$, $\mathcal{Y}_i$, $\mathcal{Z}_i$ represent the alphabet.

\noindent Similarly, the PDF of CSO uplink MAC channel is:
\begin{equation}
p(y, z_{[1,N]} | x_{[1,N]}) =
\prod_{i=1}^{N}p(y,z_i|x_i),
\label{eq2}
\end{equation}

\noindent Both BC and MAC channels in the CSO problem are assumed to be memoryless, i.e.,
\begin{eqnarray}
BC: & p(y^n_{[1,2,\ldots,N]}, z^n_{[1,2,\ldots,N]} | x^n)  = &  \\ \nonumber &\prod_{j=1}^{n}p(y^{j}_{[1,2,\ldots,N]},z^{j}_{[1,2,\ldots,N]}|x^{j}),& \\
MAC: & p(y^n, z^n_{[1,2,\ldots,N]} | x^n_{[1,\ldots,N]}) = & \\ \nonumber &\prod_{j=1}^{n}p(y^{j},z^{j}_{[1,2,\ldots,N]}|x^{j}_{[1,2,\ldots,N]}),&
\end{eqnarray}

\noindent where $n$ represents the number of times the channel is used and $j \in (1,\ldots,n)$ represents a particular instance of usage.

A (($2^{nR_1},2^{nR_2}, \ldots, 2^{nR_N}$), $n$) code for CSO BC channel with independent information consists of an encoder \textit{f} defined by 
$x^n: 2^{nR_1}\times2^{nR_2}\times\ldots2^{nR_N} \rightarrow \mathcal{X}^n$ and $N$ decoders $\phi_i, i \in (1,2,\ldots,N)$ defined by
$\hat{W_1}: \mathcal{Y}_1^n \rightarrow 2^{nR_1}$,
$\hat{W_2}: \mathcal{Y}_2^n \rightarrow 2^{nR_2}$,
$\ldots$,
$\hat{W_N}: \mathcal{Y}_N^n \rightarrow 2^{nR_N}$.
The probability of error $P_e^{(n)}$ is defined to be the probability of the decoded message not equal to the transmitted message, i.e., $P_e^{(n)} = P( \hat{W_1}(Y_1^n) \neq W_1$ or  $\hat{W_2}(Y_2^n) \neq W_2$ or $\ldots$, $\hat{W_N}(Y_N^n) \neq W_N)$,
where the messages ($W_1$, $W_2$, $\ldots$, $W_N$) are assumed to be distributed over $2^{nR_1}\times2^{nR_2}\times\ldots2^{nR_N}$.

\begin{mydef}
The CSO downlink BC channel is said to achieve secrecy rate $R = \sum_{i=1}^{N} R_i$  if there exists a sequence of (($2^{nR_1},2^{nR_2}, \ldots, 2^{nR_N}$), $n$) codes for the (1, N) encoder-decoders set with probability of error $P_e^{(n)}$ and the equivocation rate for the eavesdropper $\lim_{n \to \infty} \frac{1}{n}H(W_i|Z_{[1,N]}^n) \geq R_e \geq R_i$, for all i $\in$ [1,N].
\end{mydef}

The CSO uplink MAC channel consists of $N$ input alphabets  $\mathcal{X}_1, \mathcal{X}_2, \ldots, \mathcal{X}_N$,   one output alphabet $\mathcal{Y}$, and a probability transition function defined in Eq. (\ref{eq2}). A (($2^{nR_1},2^{nR_2}, \ldots, 2^{nR_N}$), $n$) code for CSO MAC channel with independent information consists of $N$ encoders $f_i, i \in (1,2,\ldots,N)$ defined by
$x_1^n: 2^{nR_1} \rightarrow \mathcal{X}_1^n $,
$x_2^n: 2^{nR_2} \rightarrow \mathcal{X}_2^n $,
$\ldots$,
$x_N^n: 2^{nR_N} \rightarrow \mathcal{X}_N^n$, and a decoder $\phi$ defined by 
$y^n: 2^{nR_1}\times2^{nR_2}\times\ldots2^{nR_N} \rightarrow \mathcal{Y}^n$. 
The probability of error $P_e^{(n)}$ is defined to be the probability of the decoded message not equal to the transmitted message, i.e., $P_e^{(n)} = P( \hat{W_1}(X^n) \neq W_1$ or  $\hat{W_2}(X^n) \neq W_2$ or $\ldots$, $\hat{W_N}(X^n) \neq W_N)$,
where the messages ($W_1$, $W_2$, $\ldots$, $W_N$) are assumed to be distributed over $2^{nR_1}\times2^{nR_2}\times\ldots2^{nR_N}$.

\begin{mydef}
The CSO uplink MAC channel is said to achieve secrecy rate $R = \sum_{i=1}^{N} R_i$  if there exists a sequence of (($2^{nR_1},2^{nR_2}, \ldots, 2^{nR_N}$), $n$) codes for the (N, 1) encoders-decoder set with probability of error $P_e^{(n)}$ and the equivocation rate for the eavesdropper $\lim_{n \to \infty} \frac{1}{n}H(W_i|Z_{[1,N]}^n) \geq R_e \geq R_i$, for all i $\in$ [1,N].
\end{mydef}

Since the scope of this paper is limited to memoryless and independent channels, from here onwards, we remove the superscript $j= \in {1,\ldots,n}$, and use Eqs (\ref{eq1}) and (\ref{eq2}) to describe the downlink and uplink CSO channels respectively. We consider, however, both correlated and uncorrelated messages in the CSO system. The analysis begins with the scenario in which the messages $W_{[1,N]}$ transmitted over the channels are assumed to be uncorrelated and it is extended to include correlated messages.

\subsection{Real World Scenarios}
We discuss two real world scenarios that motivate us design the CSO channel model, identify effective ways to achieve its capacity, and gain insights into the model.

\textit{Battlefield Communications}: Consider a group of soldiers gathering information about a target. Each soldier
conveys the information he or she gathers to the CSO. In
turn, the CSO communicates the actions that need to be
taken by each of the soldiers. Two scenarios arise depending on whether peer to peer communication among the soldiers is allowed or not. If peer to peer communication is not allowed, soldiers communicate only with the CSO. If peer to peer communication is allowed, soldiers form an ad hoc network, use cooperative strategies to communicate with the CSO. Cooperative communication strategies enhance secrecy capacity of the network.

\textit{Drug Dealer's Cocktail Party}: Consider a cocktail party (similar to the one described in \cite{cover2}), that is hosted by a drug dealer. Assume that a group of
agents  join this party disguising as guests, operate covertly, gather information, and pass it to the CSO. As in the previous example, the network structure created by agents and the CSO varies depending on whether agents are allowed to cooperate among themselves or not. If cooperation among the agents is allowed, the secrecy capacity of the network will be higher compared to the scenario in which cooperation is not permitted. 

\subsection{Common Objectives}

The CSO problem abstracts the underlying concept in both scenarios described above. The objective in each scenario is to pass information along to the CSO without eavesdropper's knowledge. The downlink channel from the CSO to the agents is also as important as the uplink channel. The two scenarios showcase the possible variations in covert communications in terms of the participants' awareness of the presence of the team members, and possibilities of collaboration among them. They  demonstrate the need for efficient  resource allocation strategies that maximize the information exchanged. They also demonstrate the constraints on covert communications. For example, observation of mere presence of an agent may undermine the overall objective of the mission, and nullify the value of information gained thus far. Real world applications of these fundamental research problems include secret communications in wireless networks, cellular networks, and sensor networks.

\subsection{Central Issues} There are several interesting research
questions that come up in the CSO framework. We begin the
discussion with a set of simple questions related to covert
communications between the CSO and the agents: 1) How much
information the CSO can secretly convey to any one of the agents? 2)
How much information does any one of the agents can pass along to
the CSO secretly? 3) What is the sum of information that the CSO can
secretly convey to all the agents and 4) What is the sum of
information that all agents can collectively and secretly convey  to the CSO ? 5) What is the optimal power allocation
strategy that achieves the secrecy capacity? 6) How can agents cooperate with one another to increase the secrecy capacity?

\subsection{Main Contributions}

Our main contribution is the introduction of the CSO problem and derivation of expressions for the secrecy capacity of CSO broadcast and MAC channels with uncorrelated and correlated messages. Further, we bring out the duality between the secrecy capacities of CSO broadcast and MAC channels.
We discuss the power allocation strategy for the CSO broadcast channel with independent AWGN and fading channels based on the results presented in \cite{zhangli}. In each scenario, we identify applicable existing results in the literature and relate to the CSO problem. We demonstrate that cooperation increases the secrecy capacity of the CSO BC channel. Using a new metric {\it{secrecy efficiency}}, we demonstrate that as much as 50\% improvement in secrecy efficiency can be achieved through the cooperation among the agents. We provide a proof of optimality for the proposed cooperation strategy.

\subsection{Organization}
In section II, results for the CSO problem without cooperation among agents are presented. In section III, a cooperation strategy is presented and results for the CSO problem with cooperation are presented. Section IV provides summary and conclusions. Proof of optimality for the proposed cooperation strategy is presented in appendix A.

\section{A General Model for the CSO Problem}

In this section, we develop information theoretic results for the secrecy capacity of CSO BC and MAC channels. We begin with the BC channel and  discuss the scenario in which the transmitted messages $W_{[1,\ldots,N]}$ on the  BC channel are independent and summarize our results in Lemmas 1 and 2. Then, the scenario in which the transmitted messages are correlated  is discussed in Lemmas 3 and 4. Next, the CSO MAC with correlated messages is discussed in Lemmas 5 and 6. Lemma 7 briefly discusses the duality between MAC and BC. Detailed discussion on the duality aspects in CSO problem is outside of the scope of this paper. Lemmas 8 and 9 discuss the secrecy capacities of the BC channel with additive white Gaussian noise (AWGN) and Rayleigh fading respectively. Theorems 1 and 2 discuss the power allocation strategies.

\begin{lem}
The secrecy capacity ($C_s^{(i,down)}$) of the downlink channel
between the CSO and $i^{th}$ agent is given by
\begin{math} 
C_s^{(i,down)} = \max \limits_{U_i \rightarrow X_i \rightarrow Y_i Z_i}[I(U_i;Y_i) - I(U_i;Z_i)].
\end{math}
\end{lem}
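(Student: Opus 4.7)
The plan is to reduce Lemma 1 to the classical Csiszár--Körner single-user wiretap capacity by exploiting the parallel, memoryless structure of the CSO downlink broadcast channel. First I would observe that under Eq.~(\ref{eq1}) the joint transition law $p(y_{[1,N]},z_{[1,N]}\mid x)$ factorizes across the $N$ agent/eavesdropper pairs, and that Definition~1 specializes to a per-agent secrecy constraint $\tfrac{1}{n}H(W_i\mid Z_{[1,N]}^n)\geq R_i$. Since the marginal channel $(X_i,Y_i,Z_i)$ is conditionally independent of the other $(Y_j,Z_j)$ given $X_i$, and since the $i^{th}$ agent's message $W_i$ is produced by an independent encoder, the problem decouples into the canonical Alice--Bob--Eve wiretap channel depicted in Fig.~\ref{fone} with transition law $p(y_i,z_i\mid x_i)$.

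Next I would invoke Csiszár--Körner on this decoupled channel. For \emph{achievability} I would describe Wyner's binning scheme: fix any joint distribution $p(u_i)p(x_i\mid u_i)p(y_i,z_i\mid x_i)$ obeying the Markov chain $U_i\to X_i\to Y_iZ_i$; generate $2^{n(I(U_i;Y_i)-\epsilon)}$ i.i.d.\ $U_i^n$ codewords and partition them into $2^{nR_i}$ bins of size roughly $2^{n(I(U_i;Z_i)+\epsilon)}$; to transmit $W_i$, pick a codeword uniformly from bin $W_i$ and send the induced $X_i^n$. Standard joint-typicality decoding at the agent yields $P_e^{(n)}\to0$, while the extra within-bin rate $I(U_i;Z_i)+\epsilon$ drives the eavesdropper's equivocation up to $R_i$ via the usual list-decoding/Fano argument applied at the eavesdropper. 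This gives the rate $R_i=I(U_i;Y_i)-I(U_i;Z_i)$, and maximizing over admissible $U_i$ yields the lower bound. For the \emph{converse} I would start from Fano's inequality and the equivocation constraint, and then use Csiszár's sum identity with the canonical auxiliary $U_{i,k}=(W_i,Y_i^{k-1},Z_{i,k+1}^n)$ to single-letterize, obtaining $R_i\leq I(U_i;Y_i\mid Q)-I(U_i;Z_i\mid Q)$ for a time-sharing variable $Q$; absorbing $Q$ into $U_i$ produces the matching upper bound.

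The main obstacle I expect is not algebraic but structural: I must justify carefully that the presence of the other agents' messages $W_j$ ($j\neq i$), the other outputs $Y_j$, and the other eavesdropper observations $Z_j$ in Definition~1 does not inflate or deflate the secrecy rate available for the $(X_i,Y_i,Z_i)$ marginal. Concretely, one has to show that conditioning on the richer vector $Z_{[1,N]}^n$ in the equivocation constraint reduces to conditioning on $Z_i^n$ (using independence of $W_i$ from $W_{[1,N]\setminus\{i\}}$ and the factorization of Eq.~(\ref{eq1})), and symmetrically that any joint coding across agents cannot outperform per-agent wiretap coding on this marginal. Once this decoupling step is made rigorous, the remainder of the argument is the verbatim Csiszár--Körner proof applied to a single wiretap channel.
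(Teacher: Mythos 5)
Your proposal is correct and follows essentially the same route as the paper: the paper's proof is a one-line reduction to the classical wiretap channel result (citing the independence of the channel inputs so that the other agents' channels do not affect the $i^{th}$ pair), which is exactly the decoupling-plus-Csisz\'ar--K\"orner argument you spell out. The only difference is that you reconstruct the binning achievability and sum-identity converse in detail, whereas the paper simply cites them.
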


The auxiliary random variable $U_i$ corresponding to channel $i$ will serve as the cloud center distinguishable by both the legitimate receiver as well as the eavesdropper. We refer the reader to \cite{cover1} for an excellent discussion on the auxiliary variable. We use the symbol $U$ to refer to the set of auxiliary variables ($U_1$, $U_2$, $\ldots$, $U_N$) corresponding to channels (1, 2, $\ldots$, N).

\begin{proof}
This lemma follows from the basic definition of a wiretap
channel \cite{cheong1}. The presence of other channels will not change this result as long as the
inputs to each of the channels are independent of each other.
\end{proof}

\begin{lem}
The sum capacity  ($C_s^{(sum,down)}$) of the CSO downlink broadcast channel 
consisting of $N$ parallel channels is given by

\begin{eqnarray} 
\label{Cssum}
&&C_s^{(sum,down)} = \sum_{i=1}^{N}C_s^{(i,down)} \\ \nonumber  
&=&\sum_{i=1}^{N} \max \limits_{U_i \rightarrow X_i \rightarrow Y_i Z_i} [I(U_i;Y_i) - I(U_i;Z_i)].
\end{eqnarray}
\end{lem}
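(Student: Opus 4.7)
The plan is to prove the statement in two parts: an achievability (direct) result showing the sum rate is attainable, and a converse showing it cannot be exceeded. The key structural facts to exploit are (i) the parallel-channel factorization in Eq.~(\ref{eq1}), which makes the $N$ sub-channels conditionally independent given their respective inputs; (ii) the assumption that the messages $W_{[1,N]}$ are independent in this section; and (iii) Lemma~1, which characterizes $C_s^{(i,down)}$ for each individual Alice--Bob--Eve component.

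For achievability, I would construct independent wiretap codebooks across the $N$ parallel channels. For each channel $i$, I would pick the optimizing auxiliary variable $U_i$ from Lemma~1 and use a Csisz{\'a}r--K{\"o}rner-style wiretap code of rate arbitrarily close to $C_s^{(i,down)}=I(U_i;Y_i)-I(U_i;Z_i)$, encoding $W_i$ into $X_i^n$. Since the $X_i^n$'s are independent of each other and each $(Y_i^n,Z_i^n)$ depends only on $X_i^n$, agent $i$ can reliably decode $W_i$ from $Y_i^n$. The only nontrivial point is secrecy with respect to the joint eavesdropper observation $Z_{[1,N]}^n$: I would argue that because $W_i$ is independent of $(W_j,X_j^n,Y_j^n,Z_j^n)_{j\neq i}$, we have $H(W_i\mid Z_{[1,N]}^n)=H(W_i\mid Z_i^n)$, which reduces the joint secrecy constraint to $N$ single-channel constraints that are already satisfied by the per-channel wiretap codes. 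Summing the individual rates yields the claimed $\sum_i C_s^{(i,down)}$.

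For the converse, take any sequence of codes achieving rate tuple $(R_1,\dots,R_N)$ satisfying Definition~1. Fano's inequality applied to each decoder gives $nR_i\leq I(W_i;Y_i^n)+n\epsilon_n$, and the equivocation condition gives $nR_i\leq H(W_i)-H(W_i\mid Z_{[1,N]}^n)+n\delta_n=I(W_i;Z_{[1,N]}^n)+n\delta_n$ (with sign flipped appropriately). Combining and then single-letterizing in the standard wiretap style, I would introduce an auxiliary $U_i$ made from $W_i$ and a time-sharing variable; the parallel structure of Eq.~(\ref{eq1}) together with the independence of the $W_i$'s lets me argue that one can take $U_i\to X_i\to(Y_i,Z_i)$ without loss, and that the joint optimization decouples into a sum over $i$. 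Summing over $i$ then yields $\sum_i R_i \leq \sum_i [I(U_i;Y_i)-I(U_i;Z_i)] + n(\epsilon_n+\delta_n)$, which matches the claim after taking $n\to\infty$.

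The step I expect to be the main obstacle is the decoupling in the converse, namely justifying that the sum secrecy rate $\sum_i[I(W_i;Y_i^n)-I(W_i;Z_{[1,N]}^n)]$ can be tightly upper bounded by a sum of single-letter quantities of the form $I(U_i;Y_i)-I(U_i;Z_i)$ with independent $U_i$'s. The subtlety is that each $W_i$ is placed against the \emph{joint} eavesdropper observation $Z_{[1,N]}^n$, not merely $Z_i^n$, so one must use the conditional independence $Z_j^n \perp W_i \mid \{W_j\}_{j\neq i}$ to peel off the irrelevant $Z_j^n$'s. Once that reduction is made, the remainder is a routine application of the single wiretap-channel converse (as invoked in the proof of Lemma~1) to each summand, so the bulk of the work lies in carefully managing the independence structure rather than in any new information-theoretic identity.
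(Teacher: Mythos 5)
Your proposal is correct in substance, but it takes a much more self-contained route than the paper does: the paper's entire proof of this lemma is a one-line appeal to the known result on secrecy capacity of independent parallel channels (Liang et al., cited as \cite{liang1}), together with the remark that each channel $i$ is a point-to-point CSO--agent link as in Lemma~1. What you have sketched -- independent per-channel Csisz\'ar--K\"orner wiretap codebooks for achievability, with the observation that independence of $(W_j,X_j^n,Z_j^n)_{j\neq i}$ from $W_i$ collapses $H(W_i\mid Z_{[1,N]}^n)$ to $H(W_i\mid Z_i^n)$, and a Fano-plus-equivocation converse whose crux is peeling the irrelevant $Z_j^n$'s off the joint eavesdropper observation before single-letterizing -- is essentially a reconstruction of the content of that cited result rather than a different theorem. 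The trade-off is clear: the paper buys brevity by outsourcing exactly the step you correctly identify as the hard one (the converse decoupling against $Z_{[1,N]}^n$), while your version makes the independence structure doing the work explicit and would stand on its own without the reference. One small wobble worth tightening if you write it out: the equivocation condition of Definition~1 should be used as $nR_i \leq H(W_i\mid Z_{[1,N]}^n) + n\delta_n$ and then combined with Fano to get $nR_i \leq I(W_i;Y_i^n) - I(W_i;Z_{[1,N]}^n) + n(\epsilon_n+\delta_n)$; your phrasing ``$nR_i\leq I(W_i;Z_{[1,N]}^n)+n\delta_n$ (with sign flipped appropriately)'' states an intermediate inequality in the wrong direction, though the quantity you then bound, $\sum_i[I(W_i;Y_i^n)-I(W_i;Z_{[1,N]}^n)]$, is the right one, so this is a presentational slip rather than a gap.
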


\begin{proof} This lemma follows from the results of secrecy of parallel
channels \cite{liang1}. Here, as in Lemma 1, it is assumed that each channel $i$ represents a point to point communication link between the CSO and $i^{th}$ agent. \end{proof}

Lemmas 1 and 2 are valid for the scenario in which the transmitted messages $W_{[1,N]}$ on the channels are uncorrelated. When the transmitted messages are correlated, both the CSO and the eavesdroppers benefit from the aggregation of inputs received from all agents. Of course, agents also benefit from aggregation provided that they are allowed to communicate among themselves. However, to begin with, we assume that agents do not communicate among themselves and derive the following results.

\begin{lem}
When $W_i$'s are correlated, the secrecy
capacity of the downlink channel between the CSO and any agent $i$ 
denoted by $C_s^{(i,down)}$ is a function of the number of agents (i.e., channels) and is given by 
\begin{math} 
C_s^{(i,down)} = \max \limits_{U_i \rightarrow X_i \rightarrow Y_i Z_i}[I(U_i;Y_i) - I(U_i;Z_{[1,\ldots,N]})].
\end{math}
\end{lem}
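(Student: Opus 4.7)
The plan is to reduce this to a single-user wiretap channel where the eavesdropper's effective output is the aggregated vector $Z_{[1,\ldots,N]}$ rather than $Z_i$ alone, and then invoke the Csisz\'ar--K\"orner wiretap capacity formula as in Lemma 1. The reason the aggregation is forced here is precisely the correlation: when $W_i$ and $W_j$ are dependent, the $j$th eavesdropper observation $Z_j$ (generated from $X_j$, which encodes $W_j$) carries partial information about $W_i$. Hence the secrecy requirement in Definition 1, $\tfrac{1}{n}H(W_i \mid Z^n_{[1,\ldots,N]}) \ge R_e \ge R_i$, cannot be weakened to conditioning on $Z_i^n$ only. This observation is the entire conceptual content of the lemma, and the proof just has to push it through the standard wiretap argument.

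For achievability, I would use Wyner's double-binning construction built around the auxiliary $U_i$ satisfying the Markov chain $U_i \to X_i \to Y_i Z_{[1,\ldots,N]}$. Partition $2^{n\,I(U_i;Y_i)}$ codewords into bins of size $2^{n\,I(U_i;Z_{[1,\ldots,N]})}$, index bins by confidential messages, and pick the representative $u_i^n$ of a bin uniformly at random. Reliability at agent $i$ follows from standard joint-typicality decoding against $Y_i^n$; strong secrecy against the aggregated eavesdropper follows from the bin size being matched to $I(U_i;Z_{[1,\ldots,N]})$. The achievable secrecy rate is then $I(U_i;Y_i) - I(U_i;Z_{[1,\ldots,N]})$, and maximizing over admissible joint distributions gives the claimed expression.

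For the converse, I would start from Fano's inequality, $H(W_i \mid Y_i^n) \le n \epsilon_n$, and the secrecy condition $H(W_i \mid Z^n_{[1,\ldots,N]}) \ge n R_i - n\epsilon_n$, which together yield $n R_i \le I(W_i;Y_i^n) - I(W_i;Z^n_{[1,\ldots,N]}) + 2 n\epsilon_n$. The standard single-letterization with the auxiliary $U_{i,j} = (W_i, Y_i^{j-1}, Z^{n}_{[1,\ldots,N],j+1})$, combined with the Csisz\'ar sum identity, reduces the difference of $n$-letter mutual informations to a single-letter expression of the form $I(U_i;Y_i) - I(U_i;Z_{[1,\ldots,N]})$, with the Markov chain $U_i \to X_i \to Y_i Z_{[1,\ldots,N]}$ verified from the memoryless product form in Eq.~(\ref{eq1}).

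The main obstacle is handling the fact that $Z_{[1,\ldots,N]}$ depends on all agents' inputs $X_{[1,\ldots,N]}$, not on $X_i$ alone, so the ``wiretap channel'' from $X_i$ to $Z_{[1,\ldots,N]}$ is induced only after marginalizing over the joint distribution of $X_{[1,\ldots,N]}$ that the correlated encoding produces. I would resolve this by fixing the auxiliary-input joint distribution within the maximization and treating $p(z_{[1,\ldots,N]} \mid x_i) = \sum_{x_{j\ne i}} p(z_{[1,\ldots,N]} \mid x_{[1,\ldots,N]}) \, p(x_{j\ne i} \mid x_i)$ as the effective composite eavesdropper channel for agent $i$; Lemma 1 then applies to this composite channel, and the maximization over $U_i \to X_i \to Y_i Z_{[1,\ldots,N]}$ in the statement is precisely the optimization over the choice of this composite channel induced by the message correlation.
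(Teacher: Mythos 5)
Your route is genuinely different from the paper's. The paper's proof is a three-line heuristic: it asserts that the eavesdroppers aggregate their taps, writes the single-channel secrecy capacity with the conditional term $I(U_i;Z_i\mid Z_{[1,\ldots,i-1,i+1,\ldots,N]})$, and then replaces that conditional mutual information by $I(U_i;Z_{[1,\ldots,N]})$ --- a step that is an equality only when $I(U_i;Z_{[1,\ldots,i-1,i+1,\ldots,N]})=0$, which sits awkwardly with the correlation premise (otherwise it only yields an inequality, since $I(U_i;Z_{[1,\ldots,N]})\geq I(U_i;Z_i\mid Z_{[\mathrm{others}]})$). There is no coding argument and no converse in the paper. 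You instead do the standard rigorous thing: treat agent $i$'s link as a single wiretap channel whose eavesdropper output is the aggregated vector, prove achievability by Wyner binning matched to $I(U_i;Z_{[1,\ldots,N]})$, and prove the converse by Fano plus the equivocation constraint of Definition 1 and a Csisz\'ar-sum single-letterization. What your approach buys is an actual operational justification of why the aggregated $Z_{[1,\ldots,N]}$ (and not $Z_i$) is the right object, directly from the secrecy constraint $\frac{1}{n}H(W_i\mid Z^n_{[1,\ldots,N]})\geq R_i$; what the paper's approach buys is brevity and the interpretive point (eavesdropper aggregation) without any coding-theoretic machinery.

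One caveat, which you flagged yourself but did not fully close: the composite channel $p(z_{[1,\ldots,N]}\mid x_i)$ obtained by marginalizing over $x_{j\neq i}$ is induced by the joint encoding of correlated messages, so at the block level the other users' codewords are not i.i.d.\ conditioned on $X_i^n$ and the induced channel $X_i^n\to Z^n_{[1,\ldots,N]}$ need not be memoryless; invoking the memoryless wiretap formula (your Lemma 1 appeal, and the Markov verification in your converse) on it therefore needs an extra argument or an explicit restriction on the admissible input distributions. Also note that the maximization in the lemma is over $p(u_i,x_i)$ for a fixed channel, so it is not literally ``an optimization over the choice of composite channel'' as your last sentence suggests. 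Since the paper's own proof glosses over exactly this dependence, this is a soft spot shared with the original rather than a defect unique to your argument, but it is worth stating explicitly if you write the proof out in full.
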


\begin{proof} Each time an agent sends information on a channel, the eavesdropper(s) benefits from
aggregating the information they acquires by tapping all channels. Agents lose this advantage if they are not allowed to communicate among themselves.

\begin{eqnarray}
C_s^{(i,down)} &=& \max  \limits_{U_i \rightarrow X_i \rightarrow Y_i Z_i} [I(U_{i};Y_{i}) \\ \nonumber
&& ~ - ~I(U_i;Z_{i}| Z_{[1,..,i-1,i+1,..,N]})] \\ \nonumber
&=& \max \limits_{U_i \rightarrow X_i \rightarrow Y_i Z_i}  [I(U_{i};Y_{i}) -  I(U_i; Z_{[1,\ldots,N]})] \\ \nonumber
&=& \max \limits_{U_i \rightarrow X_i \rightarrow Y_i Z_i} [I(U_{i};Y_{i}) - I(U_i; Z)]
\end{eqnarray}
\end{proof}

As the number of agents increases, the secrecy capacity of the downlink channel between the CSO and any agent $i$ may diminish due to the advantage  eavesdroppers have over the agents.

\begin{lem}
When $W_i$'s are correlated, the sum secrecy
capacity of the downlink channels between the CSO and the agents,
denoted by $C_s^{(sum,down)}$,  is bounded above by
\begin{eqnarray}
C_s^{(sum,down)} &\leq & \max \limits_{U \rightarrow X \rightarrow Y Z}  [I(U;Y) - \\ \nonumber
&& I(U;Z)].
\end{eqnarray}
\end{lem}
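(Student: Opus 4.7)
The plan is to realize the sum $\sum_i C_s^{(i,\mathrm{down})}$ as a single mutual-information difference $I(U;Y) - I(U;Z)$ evaluated at one specific joint auxiliary $U$, so that the desired inequality reduces to the trivial observation that a particular value is at most the maximum.

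First, for each $i$ I would invoke Lemma~3 to pick a marginal distribution on $(U_i^\star, X_i^\star)$ obeying the per-channel Markov chain $U_i^\star \to X_i^\star \to (Y_i, Z_i)$ that attains $C_s^{(i,\mathrm{down})} = I(U_i^\star; Y_i) - I(U_i^\star; Z_{[1,\ldots,N]})$. Because these $N$ maximizations are decoupled, nothing forces the $U_i^\star$ to be jointly dependent, so I am free to couple them by the product of their marginals. This product coupling is consistent with the parallel channel factorization $p(y,z\mid x) = \prod_i p(y_i, z_i \mid x_i)$ from Eq.~(\ref{eq1}).

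Under this product coupling the tuples $\{(U_i^\star, X_i^\star, Y_i, Z_i)\}_{i=1}^N$ are jointly independent across $i$, so in particular $U_i^\star$ is independent of $Z_j$ for $j \neq i$; this forces $I(U_i^\star; Z_{[1,\ldots,N]}) = I(U_i^\star; Z_i)$. Setting $U := (U_1^\star,\ldots,U_N^\star)$ and $X := (X_1^\star,\ldots,X_N^\star)$, mutual information tensorizes under the product measure, yielding $I(U;Y) = \sum_i I(U_i^\star; Y_i)$ and $I(U;Z) = \sum_i I(U_i^\star; Z_i)$. Subtracting and reinstating the identity above gives
\[
I(U;Y) - I(U;Z) \;=\; \sum_{i=1}^{N} \bigl[I(U_i^\star; Y_i) - I(U_i^\star; Z_{[1,\ldots,N]})\bigr] \;=\; C_s^{(\mathrm{sum},\mathrm{down})}.
\]
Since the constructed $(U,X,Y,Z)$ satisfies the Markov chain $U \to X \to YZ$ required by the outer maximization, this $U$ is a feasible argument, so $C_s^{(\mathrm{sum},\mathrm{down})} \leq \max_{U \to X \to YZ}[I(U;Y) - I(U;Z)]$.

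The main obstacle I anticipate is the tensorization step: one must carefully verify that the parallel channel factorization together with the chosen product coupling really forces the joint triple $(U,Y,Z)$ into block-independent form across channels, so that both mutual informations split cleanly as sums and no residual cross terms appear. Once that block-independence is in place, everything else is bookkeeping and a single application of monotonicity of the maximum.
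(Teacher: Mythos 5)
Your argument reaches the right conclusion but by a genuinely different route from the paper. The paper's proof is a two-step chain of maxima: it writes $C_s^{(sum,down)}=\sum_i \max_{U_i\rightarrow X_i\rightarrow Y_iZ_i}[I(U_i;Y_i)-I(U_i;Z_{[1,\ldots,N]})]$, pulls the sum inside a single maximization, and then asserts that this is dominated by $\max_{U\rightarrow X\rightarrow YZ}[I(U;Y)-I(U;Z)]$, without exhibiting any joint distribution that realizes the comparison. You instead construct an explicit feasible point for the outer maximization, $U=(U_1^\star,\ldots,U_N^\star)$ under a product coupling, and use tensorization of mutual information over the parallel channels of Eq.~(\ref{eq1}); in effect you supply the justification for the paper's final, unargued inequality. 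That is the main thing your approach buys: a concrete witness rather than a formal manipulation of maxima.

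One step is stated too strongly, however, and you should fix it. Your claim that ``nothing forces the $U_i^\star$ to be jointly dependent, so I am free to couple them by the product of their marginals'' conflicts with the premise of the lemma: the $W_i$ are correlated, and in Lemma~3 the aggregation term $I(U_i;Z_{[1,\ldots,N]})$ is evaluated under the coupling induced by that correlation; the coupling is not a free design parameter (if it were, the product coupling would drive $I(U_i;Z_{[1,\ldots,N]})$ down to $I(U_i;Z_i)$ and Lemma~3 would collapse to Lemma~1). Consequently your displayed equality $I(U;Y)-I(U;Z)=C_s^{(sum,down)}$ is not correct in general. What saves the proof is the direction of the error: $I(U_i^\star;Y_i)$ and $I(U_i^\star;Z_i)$ depend only on the per-channel marginal of $(U_i^\star,X_i^\star)$, which your re-coupling preserves, while under the true coupling $I(U_i^\star;Z_{[1,\ldots,N]})\geq I(U_i^\star;Z_i)$. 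Hence $C_s^{(sum,down)}\leq \sum_i\bigl[I(U_i^\star;Y_i)-I(U_i^\star;Z_i)\bigr]=I(U;Y)-I(U;Z)\leq \max_{U\rightarrow X\rightarrow YZ}[I(U;Y)-I(U;Z)]$, so the bound survives once you replace the claimed equality by this inequality. (A minor notational point: Eq.~(\ref{eq1}) is written with a single input $x$, but your reading with per-channel inputs $x_i$ is the one the paper itself uses in Lemmas~1 and~2, so the tensorization step is consistent with the intended model.)
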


\begin{proof} The eavesdroppers  benefit even
more by aggregating the information leaked from all downlink channels which leads to the following result:

\begin{eqnarray}
&&C_s^{(sum,down)} = \sum_{i=1}^{N}C_s^{(i,down)}  \\ \nonumber
&=& \sum_{i=1}^{N} \max  \limits_{U_i \rightarrow X_i \rightarrow Y_i Z_i} [I(U_{i};Y_{i})-I(U_i; Z_{[1,\ldots,N]})] \\ \nonumber
& \leq&  \max \limits_{U_i \rightarrow X_i \rightarrow Y_i Z_i} [ \sum_{i=1}^{N}I(U_{i};Y_{i})- \sum_{i=1}^{N}I(U_{i}; Z_{[1,\ldots,N]}) ] \\ \nonumber
& \leq &  \max \limits_{U \rightarrow X \rightarrow Y Z} [I(U;Y)-I(U; Z)]
\end{eqnarray}
\end{proof}

We move on to the CSO uplink MAC channel. The following two Lemmas discuss the secrecy capacity corresponding to the CSO uplink MAC channels. 

\begin{lem}

When $W_i$'s are correlated, the secrecy
capacity of the uplink channel between any agent $i$ and the CSO
denoted by $C_s^{(i,up)}$ is a function of the number of agents (i.e., channels) and is given by:

\begin{math} 
C_s^{(i,up)} = \max \limits_{U_i \rightarrow X_i \rightarrow Y_i Z_i}[I(U_i;Y_i) - I(U_i;Z_{[1,\ldots,N]})].
\end{math}
\end{lem}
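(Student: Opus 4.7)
The plan is to mirror the proof of Lemma 3, since Lemma 5 is its uplink counterpart and has the same mutual-information form. The starting point is the single-channel uplink wiretap bound (the uplink analogue of Lemma 1): viewing agent $i$'s link in isolation as an Alice-Bob-Eve channel, its secrecy capacity is $\max[I(U_i;Y_i) - I(U_i;Z_i)]$ over Markov chains $U_i \to X_i \to Y_i Z_i$. The product decomposition in Eq.~(\ref{eq2}) ensures that the MAC can be viewed as $N$ parallel links, making this per-link expression well defined.

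I would then account for the aggregation advantage of the eavesdropper. Because the $W_i$'s are correlated and the eavesdropper taps all $N$ uplinks, its effective observation against agent $i$ is the full tuple $Z_{[1,\ldots,N]}$ rather than $Z_i$ alone. Since the agents are forbidden from communicating with one another, they cannot symmetrize this by aggregating at the encoder side, which is exactly the asymmetry that makes the secrecy rate depend on $N$. Following the Lemma 3 template, I would replace $I(U_i;Z_i)$ by the conditional leakage $I(U_i;Z_i \mid Z_{[1,\ldots,i-1,i+1,\ldots,N]})$, reflecting that the eavesdropper already holds the other $N-1$ observations when processing channel $i$.

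The key computational step is the chain-rule identity $I(U_i;Z_{[1,\ldots,N]}) = I(U_i;Z_{[1,\ldots,i-1,i+1,\ldots,N]}) + I(U_i;Z_i \mid Z_{[1,\ldots,i-1,i+1,\ldots,N]})$. Under no cooperation, $U_i$ is a function of $W_i$ and local randomness only, so the cross term on the right is absorbed into the bound exactly as in the proof of Lemma 3. What remains is precisely the claimed expression $\max[I(U_i;Y_i) - I(U_i;Z_{[1,\ldots,N]})]$, with the $N$-dependence showing up through the aggregated $Z_{[1,\ldots,N]}$.

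The main obstacle is justifying that the cross term $I(U_i;Z_{[1,\ldots,i-1,i+1,\ldots,N]})$ is negligible when the $W_i$'s are correlated, because in principle the coupling $U_i \to W_i \to W_{\setminus i} \to X_{\setminus i} \to Z_{\setminus i}$ can carry information from $U_i$ to the other eavesdropper observations. I would handle this the same way the downlink proof does: either present the result as an upper bound and absorb the cross term into the inequality, or restrict the joint distribution in the maximization so that $U_i$ depends on the global message structure only through the local encoder. The duality with the broadcast case (made explicit later in Lemma 7) then ensures that Lemma 5 is the natural uplink counterpart of Lemma 3.
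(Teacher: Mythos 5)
Your treatment of the eavesdropper term follows the paper's own (equally heuristic) route: write the leakage against agent $i$ as the conditional quantity $I(U_i;Z_i \mid Z_{[1,\ldots,i-1,i+1,\ldots,N]})$, invoke the chain rule, and absorb the cross term, arriving at $I(U_i;Z_{[1,\ldots,N]})$. To that extent the two arguments coincide.

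However, you miss the step that, in the paper, actually distinguishes Lemma 5 from Lemma 3: on the uplink the \emph{legitimate receiver} also aggregates. The paper's proof begins by observing that ``the CSO makes use of the information that has been received from all other agents as side information,'' and accordingly writes the first term as a conditional mutual information $I(U_i;Y_i \mid Y_{[1,\ldots,i-1,i+1,\ldots,N]})$, which it then identifies (by the same chain-rule sleight of hand used on the eavesdropper side) with $I(U_i;Y_{[1,\ldots,N]})$, i.e.\ $I(U_i;Y)$. Your proposal instead freezes the legitimate term at $I(U_i;Y_i)$, so your Lemma 5 becomes literally identical to Lemma 3; this cannot support the paper's immediately following remark that the per-agent uplink secrecy capacity ``may increase'' with $N$ (in contrast to the downlink case, where it ``may diminish''), nor the way Lemmas 4 and 6 are later used to assert duality. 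You do reproduce the formula as printed in the lemma statement, but that printed formula appears to be a typo (the paper's own derivation ends at $I(U_i;Y)-I(U_i;Z)$ with $Y=Y_{[1,\ldots,N]}$), so matching it does not substitute for the missing idea: the CSO-side aggregation of $Y_{[1,\ldots,N]}$ as side information is the core of the paper's argument for the uplink, and your proof never engages with it. A complete version of your argument should apply the same conditioning/chain-rule treatment symmetrically to the $Y$ term, and then either keep the conditional form or justify dropping the cross term $I(U_i;Y_{[1,\ldots,i-1,i+1,\ldots,N]})$, exactly as you propose to do for $Z$.
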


\begin{proof} Each time an agent sends information on a channel, the CSO makes use of the information that has been received from all other agents as side information.
On the same token, the eavesdropper also benefits from
aggregating the information it acquires by tapping all channels.

\begin{eqnarray}
C_s^{(i,up)} &=& \max  \limits_{U_i \rightarrow X_i \rightarrow Y_i Z_i} [I(U_{i};| Y_{[1,..,i-1,i+1,..,N]}) \\ \nonumber
&& ~ - ~I(U_i;Z_{i}| Z_{[1,..,i-1,i+1,..,N]})] \\ \nonumber
&=& \max \limits_{U_i \rightarrow X_i \rightarrow Y_i Z_i}  [I(U_{i};Y_{[1,\ldots,N]}) -  I(U_i; Z_{[1,\ldots,N]})] \\ \nonumber
&=& \max \limits_{U_i \rightarrow X_i \rightarrow Y_i Z_i} [I(U_{i};Y) - I(U_i; Z)]
\end{eqnarray}
\end{proof}

As the number of agents increases, the secrecy capacity of the uplink between the agent and the CSO may increase.

\begin{lem}
When $W_i$'s are correlated, the sum secrecy
capacity of the uplink channels between the agents and the CSO,
denoted by $C_s^{(sum,up)}$,  is bonded from above by
\begin{eqnarray}
C_s^{(sum,up)} &\leq & \max \limits_{U \rightarrow X \rightarrow Y Z}  [I(U;Y) - \\ \nonumber
&& I(U;Z)].
\end{eqnarray}
\end{lem}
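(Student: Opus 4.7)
The plan is to mirror the proof of Lemma 4, only this time in the uplink MAC direction. I would begin from the definition $C_s^{(sum,up)} = \sum_{i=1}^{N} C_s^{(i,up)}$ and invoke Lemma 5 to replace each summand by $\max_{U_i \to X_i \to Y_i Z_i}\bigl[I(U_i;Y) - I(U_i;Z)\bigr]$, where $Y = Y_{[1,N]}$ and $Z = Z_{[1,N]}$ denote the aggregate CSO and eavesdropper observation vectors admitted by the correlated-messages setting.

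Next I would push the summation inside the maximization, using the elementary fact that a sum of maxima (over coupled arguments) is upper bounded by the joint maximum of the sum. This reduces the task to bounding $\sum_{i=1}^{N}\bigl[I(U_i;Y) - I(U_i;Z)\bigr]$ by a single secrecy quantity expressed in terms of a joint auxiliary $U = (U_1, \ldots, U_N)$. The two ingredients I need are $\sum_i I(U_i;Y) \leq I(U;Y)$ for the legitimate term and $\sum_i I(U_i;Z) \geq I(U;Z)$ for the wiretap term (so that the negative contribution is at most $-I(U;Z)$). Both inequalities follow from a careful application of the chain rule combined with the memoryless MAC factorization in Eq.~(\ref{eq2}) and the Markov chain $U_i \to X_i \to (Y,Z)$ inherited from the per-agent codes.

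The main obstacle is exactly this collapsing step, i.e.\ legitimizing the passage from $N$ per-agent auxiliaries to a single joint auxiliary $U$ under a common maximization. This is the same technical manipulation that underpins Lemma 4, and my proof would import it essentially verbatim; the only adjustment is that in the MAC setting the CSO naturally receives $Y$ as side information from all agents simultaneously (so $Y$ here plays the role that $Y_i$ did in the BC proof), while the eavesdropper's aggregate $Z_{[1,N]}$ is handled identically. Since the lemma only claims an outer bound, no matching achievability scheme is required to complete the argument.
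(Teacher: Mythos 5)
Your outline coincides with the paper's proof for the first three steps: write $C_s^{(sum,up)}=\sum_{i=1}^{N}C_s^{(i,up)}$, substitute the per-agent expression from Lemma 5, and bound the sum of maxima by the maximum of the sum over the coupled auxiliaries. The problem is the final collapsing step, and specifically the two ingredients you propose to justify it. The wiretap-side inequality $\sum_i I(U_i;Z)\geq I(U;Z)$ is false in general: by the chain rule $I(U;Z)=\sum_i I(U_i;Z\mid U_{[1,i-1]})$, and when the $U_i$ are mutually independent each term satisfies $I(U_i;Z\mid U_{[1,i-1]})=H(U_i)-H(U_i\mid Z,U_{[1,i-1]})\geq H(U_i)-H(U_i\mid Z)=I(U_i;Z)$, so the inequality runs the other way, $I(U;Z)\geq\sum_i I(U_i;Z)$. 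A concrete counterexample is $U_1,U_2$ independent uniform bits with $Z=U_1\oplus U_2$, where $\sum_i I(U_i;Z)=0<1=I(U;Z)$. Conversely, the legitimate-side bound $\sum_i I(U_i;Y)\leq I(U;Y)$ does hold for independent $U_i$ but can fail when they are correlated (take $U_1=U_2=Y$), and in the correlated-message setting the maximizing auxiliaries need not be independent. So neither ingredient can be applied at the maximizing distribution, and the passage from $\max_{\{U_i\}}\sum_i\bigl[I(U_i;\cdot)-I(U_i;Z_{[1,N]})\bigr]$ to the single-auxiliary expression $\max_{U\to X\to YZ}\bigl[I(U;Y)-I(U;Z)\bigr]$ remains unproved in your write-up.

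For what it is worth, the paper's own proof of Lemma 6 offers no justification for this step either: it asserts the last inequality outright, essentially repeating the display from Lemma 4 (even keeping $I(U_i;Y_i)$ where the uplink version from Lemma 5's proof would use the aggregate $Y$, as you correctly do). So your chain matches the paper's structure, but your attempt to close its one genuine gap rests on an inequality that points in the wrong direction; as it stands the collapsing step still needs an argument, e.g.\ a direct converse for the aggregate channel rather than a term-by-term comparison of mutual informations.
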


\begin{proof} Similar to the scenario discussed in Lemma 4, the CSO as well as the eavesdroppers  benefit by aggregating the information leaked from all uplink channels which leads to the following result:

\begin{eqnarray}
&&C_s^{(sum,up)} = \sum_{i=1}^{N}C_s^{(i,up)}  \\ \nonumber
&=& \sum_{i=1}^{N} \max  \limits_{U_i \rightarrow X_i \rightarrow Y_i Z_i} [I(U_{i};Y_{i})-I(U_i; Z_{[1,\ldots,N]})] \\ \nonumber
& \leq & \max \limits_{U_i \rightarrow X_i \rightarrow Y_i Z_i} [ \sum_{i=1}^{N}I(U_{i};Y_{i})- \sum_{i=1}^{N}I(U_{i}; Z_{[1,\ldots,N]}) ] \\ \nonumber
& \leq & \max \limits_{U \rightarrow X \rightarrow Y Z} [I(U;Y)-I(U; Z)]
\end{eqnarray}

\end{proof}
\subsection{Duality between CSO uplink and downlink channels}

\begin{lem}
The sum secrecy capacities of CSO downlink BC channel and uplink MAC are equivalent.
\end{lem}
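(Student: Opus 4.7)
The plan is to combine the outer bounds already established in Lemmas 4 and 6 with a MAC-BC duality argument to conclude that both sum secrecy capacities attain the same value. First I would line up the right-hand sides of the two inequalities: Lemma 4 gives $C_s^{(sum,down)} \leq \max_{U \to X \to YZ}[I(U;Y)-I(U;Z)]$, and Lemma 6 gives $C_s^{(sum,up)} \leq \max_{U \to X \to YZ}[I(U;Y)-I(U;Z)]$. Both bounds are identical functionals of a joint distribution $p(u,x,y,z)$ compatible with the channel factorizations in Eqs. (\ref{eq1})--(\ref{eq2}), so the two sum capacities share a common outer bound. Establishing equality of the two capacities then reduces to showing that this common bound is tight on both sides.

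Next I would handle achievability on each side. On the downlink, a Csiszar-Korner wiretap codebook with auxiliary variable $U$ serving as cloud center attains $I(U;Y)-I(U;Z)$, and an appropriate superposition across the parallel subchannels attains the Lemma 4 bound with equality. On the uplink MAC, the dual construction has the agents collectively encode with a joint input distribution whose single-letter marginal matches the BC input distribution; the aggregate legitimate signal $Y$ and aggregate eavesdropper signal $Z$ then realize the same mutual information terms $I(U;Y)$ and $I(U;Z)$. The required correspondence is of the same flavor as the classical Jindal-Vishwanath-Goldsmith MAC-BC duality for non-secret rates, extended to account for the eavesdropper output in a manner consistent with the secrecy-rate definition adopted in Definitions 1 and 2.

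The main obstacle will be making the achievability side fully rigorous, since the BC and MAC have structurally different coding architectures — one encoder with $N$ decoders versus $N$ encoders with one decoder — so a bijection between joint distributions preserving $I(U;Y)$ and $I(U;Z)$ must be invoked. I would exploit the channel symmetry built into Eqs. (\ref{eq1})--(\ref{eq2}) to argue that the optimizing distribution on one side transports to an optimizing distribution on the other. Since the paper explicitly defers a detailed duality analysis to future work, the proof would emphasize the matching outer bounds of Lemmas 4 and 6 as the quantitative backbone and invoke this duality correspondence to pin down achievability, thereby concluding $C_s^{(sum,down)} = C_s^{(sum,up)}$.
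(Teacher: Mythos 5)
Your first paragraph is, in essence, the paper's entire proof: the authors simply observe that Lemma 4 and Lemma 6 bound $C_s^{(sum,down)}$ and $C_s^{(sum,up)}$ by the identical expression $\max_{U \rightarrow X \rightarrow YZ}\left[I(U;Y)-I(U;Z)\right]$ and declare the two sum secrecy capacities equivalent on that basis; no achievability argument is offered. So your quantitative backbone coincides exactly with the paper's route. Where you diverge is in recognizing --- correctly --- that a common outer bound does not by itself force equality of the two capacities, and in trying to close that gap with a secrecy version of MAC--BC duality in the style of Jindal--Vishwanath--Goldsmith. That extra step is not in the paper, and, as you yourself flag, it is also the part you cannot currently make rigorous: that duality is established for Gaussian MAC/BC without an eavesdropper, the secrecy capacity region of a wiretap MAC is not known in general, and no explicit correspondence between optimizing distributions preserving both $I(U;Y)$ and $I(U;Z)$ is constructed, so the achievability side remains a sketch. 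The paper sidesteps all of this by treating ``equivalence'' at the level of the Lemma 4/6 bounds and explicitly deferring any genuine duality analysis to future work (the cooperative case is stated only as a conjecture). In short: your argument matches the paper where the paper actually argues, and your additional achievability program is a sensible but unfinished strengthening rather than a step the authors' own proof supplies or relies on.
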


\begin{proof}
Lemma 4 and Lemma 6 demonstrate the equivalence between the sum secrecy capacities of the  CSO uplink and downlink channels.
\end{proof}

We conjecture that if agents are allowed to cooperate among themselves, there exists a perfect duality between the secrecy capacities of CSO uplink and downlink channels. Detailed study of this duality is outside the scope of this paper and will be considered in future.

\subsection{CSO Problem with Parallel Gaussian channels}
Assume that all channels are additive white Gaussian noise (AWGN) channels as depicted in Fig.
\ref{fig2}.

In this case, the AWGN channel between the CSO and any one of the
agent $i$ can be described as
\begin{equation}
Y_{i} = X_{i} + N_{Mi}
\end{equation}
where $X_{i}$ and $Y_i$ are the input and the output of the
$i^{th}$ channel and $N_{Mi}$ represents the zero-mean white Gaussian noise, $i.e.$ $N_{Mi} ~\scriptsize{\sim}~ \mathcal{N}(0,\sigma_{N_{Mi}}^2)$, on the same
channel. Similarly, the eavesdropper's channel can be modelled as
\begin{equation}
Z_{i} = X_{i} + N_{Wi}
\end{equation}
where $Z_i$ represents the output of the eavesdropper's channel and
$N_{Wi}$ denotes zero-mean white Gaussian noise, $i.e.$ $N_{Wi} ~\scriptsize{\sim}~ \mathcal{N}(0,\sigma_{N_{Wi}}^2)$, on the eavesdropper's channel. We  assume that the model has a constraint on average power ($P$)
which an be described by
\begin{equation}
\frac{1}{2} \sum_{i=1}^{N}x_{i}^{2} \leq P.
\end{equation}

\begin{lem}
In the CSO model with parallel Gaussian channels, the
sum secrecy capacity $C_s^{(sum,down,G)}$ of the downlink BC channel between
the agents and the CSO has an upper bound given by:

\begin{equation}
C_s^{(sum,down,G)} \leq \sum_{i}^{N}\frac{1}{2}\log(1 +
\frac{P_i}{\sigma_{N_{Mi}}^2}) - \sum_{i}^{N}\frac{1}{2}\log(1 +
\frac{P_i}{\sigma_{N_{Wi}}^2}).
\end{equation}
\end{lem}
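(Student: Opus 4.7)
The plan is to reduce the problem to $N$ independent scalar Gaussian wiretap channels and then invoke the classical single-channel result for each. First, since the $N$ downlink channels are independent AWGN channels with mutually independent noises $N_{Mi}$ and $N_{Wi}$, Lemma 2 applies and gives
\begin{equation*}
C_s^{(sum,down,G)} \;=\; \sum_{i=1}^{N} C_s^{(i,down,G)} \;=\; \sum_{i=1}^{N} \max_{U_i \to X_i \to Y_i Z_i} \bigl[I(U_i;Y_i) - I(U_i;Z_i)\bigr],
\end{equation*}
under the per-channel power constraints $E[X_i^2] \leq P_i$ implicit in the total budget $\tfrac{1}{2}\sum_i x_i^2 \leq P$. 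It therefore suffices to upper-bound each summand separately.

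Next, for the $i$th scalar Gaussian wiretap channel $Y_i = X_i + N_{Mi}$, $Z_i = X_i + N_{Wi}$, I would invoke the classical Leung-Yan-Cheong extremal result: a zero-mean Gaussian input with $U_i = X_i \sim \mathcal{N}(0,P_i)$ maximizes the secrecy-rate expression under a second-moment constraint. Substituting this choice into the two Gaussian mutual informations yields
\begin{equation*}
\max_{U_i \to X_i \to Y_i Z_i} \bigl[I(U_i;Y_i) - I(U_i;Z_i)\bigr] \;\leq\; \tfrac{1}{2}\log\!\Bigl(1+\tfrac{P_i}{\sigma_{N_{Mi}}^2}\Bigr) - \tfrac{1}{2}\log\!\Bigl(1+\tfrac{P_i}{\sigma_{N_{Wi}}^2}\Bigr).
\end{equation*}
Summing these per-channel bounds over $i = 1, \ldots, N$ then produces the inequality in the lemma statement.

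I expect the main obstacle to be the justification of Gaussian optimality in the second step. In the degraded regime, where Bob's channel is a stochastic upgrade of Eve's (or vice versa), one can take $U_i = X_i$ without loss by a data-processing argument, after which the Gaussian choice maximizes the entropy difference $h(Y_i)-h(Z_i)$ by a standard entropy-power-inequality calculation. The non-degraded case is subtler and I would cite the existing wiretap-channel literature (in particular the Gaussian specialization of Csiszár-Körner) rather than reprove extremality from scratch. A minor housekeeping point is that when $\sigma_{N_{Wi}}^2 < \sigma_{N_{Mi}}^2$ on some subchannel the stated right-hand side can go negative; strictly speaking the inequality should then be read after implicitly truncating each per-channel term at zero, since the true secrecy capacity is nonnegative.
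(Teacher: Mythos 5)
Your proposal is correct and takes essentially the same route as the paper: the paper's proof is a one-line substitution of the single-channel Gaussian wiretap capacity into the per-channel sum decomposition of Lemma 2 (citing the parallel-channel secrecy results), which is exactly the reduction plus Leung-Yan-Cheong extremality argument you spell out in detail. Your caveat that per-channel terms can go negative when $\sigma_{N_{Wi}}^2 < \sigma_{N_{Mi}}^2$ is a fair observation the paper leaves unaddressed, but it does not change the agreement between the two arguments.
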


\begin{proof}
This result is obtained by substituting capacity of parallel
Gaussian channels in (\ref{Cssum}) \cite{liang1}.
\end{proof}

Next, we look into the power allocation strategies for the CSO model with parallel AWGN downlink channels. Power allocation for parallel secrecy channels
with Gaussian noise has been investigated in
\cite{zhangli}. The following theorem relates this to the CSO downlink channel and provides further insights.

\begin{theo}
The power allocation strategy that achieves
the sum secrecy capacity $C_s^{(sum,down,G)}$  is a function of both
the sum of variances $\sigma_{N_{Mi}}^2+\sigma_{N_{Wi}}^2$ and difference of variances $\sigma_{N_{Mi}}^2-\sigma_{N_{Wi}}^2$ and it
satisfies the following two constraints \cite{zhangli}:

\begin{eqnarray}
\label{eq:lambda}
P_i &>& 0, ~if~ \sigma_{N_{Wi}}^2 > \sigma_{N_{Mi}}^2 \\ \nonumber  &=& 0, ~otherwise
\\ \nonumber
\left( \frac{1}{\sigma_{N_{Mi}}^2} - \frac{1}{\sigma_{N_{Wi}}^2} \right) &>& 2 \lambda
\end{eqnarray}
where $\lambda$ is chosen such that $\sum_i \left[ \lambda -
(\sigma_{N_{Wi}}^2 - \sigma_{N_{Mi}}^2) \right]^+ = P$.
\end{theo}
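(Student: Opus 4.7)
The plan is to treat this as a constrained optimization problem and derive the two stated conditions from the KKT stationarity and complementary slackness relations applied to the upper bound in Lemma 8. Setting $f_i(P_i) = \tfrac{1}{2}\log(1 + P_i/\sigma_{N_{Mi}}^2) - \tfrac{1}{2}\log(1 + P_i/\sigma_{N_{Wi}}^2)$, I would maximize $\sum_i f_i(P_i)$ subject to $\sum_i P_i \le P$ and $P_i \ge 0$. Before invoking KKT, I would verify that each $f_i$ is concave on $[0,\infty)$ whenever $\sigma_{N_{Wi}}^2 > \sigma_{N_{Mi}}^2$ (its second derivative is $\tfrac{1}{2}[(\sigma_{N_{Wi}}^2+P_i)^{-2} - (\sigma_{N_{Mi}}^2+P_i)^{-2}]$, which is negative in that regime), and that when $\sigma_{N_{Wi}}^2 \le \sigma_{N_{Mi}}^2$ the function $f_i$ is non-positive and non-increasing, so $P_i=0$ is trivially optimal. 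This observation already isolates the first of the two stated conditions.

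Next, I would form the Lagrangian $L = \sum_i f_i(P_i) - \lambda(\sum_i P_i - P) + \sum_i \mu_i P_i$ and write the stationarity condition as
\begin{equation*}
\frac{1}{2}\left[\frac{1}{\sigma_{N_{Mi}}^2 + P_i} - \frac{1}{\sigma_{N_{Wi}}^2 + P_i}\right] = \lambda - \mu_i.
\end{equation*}
For channels where $P_i > 0$, complementary slackness forces $\mu_i = 0$; since the left-hand side is a decreasing function of $P_i$ on the regime of interest, its supremum as $P_i \to 0^+$ equals $\tfrac{1}{2}(1/\sigma_{N_{Mi}}^2 - 1/\sigma_{N_{Wi}}^2)$. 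A necessary condition for positive allocation is therefore exactly $1/\sigma_{N_{Mi}}^2 - 1/\sigma_{N_{Wi}}^2 > 2\lambda$, which is the second stated constraint. The water-level $\lambda$ is then pinned down by substituting the per-channel solution back into the active power constraint $\sum_i P_i = P$.

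The main obstacle will be reconciling the implicit stationarity equation with the clean projection form $P_i = [\lambda - (\sigma_{N_{Wi}}^2 - \sigma_{N_{Mi}}^2)]^+$ that appears in the constraint on $\lambda$. Solving the KKT equation exactly produces a quadratic in $P_i$, not a linear water-filling expression, so I would either argue in the appropriate high-SNR or small-$\lambda$ regime where the quadratic linearizes to the stated form, or invoke the closed-form parallel-channel secrecy derivation of \cite{zhangli} and verify that the two conditions in the theorem statement are equivalent to the stationary points obtained there. Since the theorem explicitly attributes the power-allocation formula to \cite{zhangli}, the cleanest path is to present the KKT reasoning above as a self-contained derivation of the two displayed conditions and then cite \cite{zhangli} for the particular water-level parameterization $\sum_i [\lambda - (\sigma_{N_{Wi}}^2 - \sigma_{N_{Mi}}^2)]^+ = P$.
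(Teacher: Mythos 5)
Your proposal is correct and follows essentially the same route as the paper: a Lagrangian/Kuhn--Tucker analysis in which the first condition ($P_i>0$ only if $\sigma_{N_{Wi}}^2 > \sigma_{N_{Mi}}^2$) comes from the positivity requirement on $P_i$ and the threshold condition $\tfrac{1}{\sigma_{N_{Mi}}^2} - \tfrac{1}{\sigma_{N_{Wi}}^2} > 2\lambda$ comes from the stationarity equation (the paper solves the resulting quadratic explicitly to get $P_i = \tfrac{1}{2}\bigl(\sqrt{N_{\triangle}^2 + 2N_{\triangle}/\lambda} - N_{\sum}\bigr)$, whereas you argue via monotonicity of the derivative as $P_i \to 0^{+}$, which is an equivalent and arguably cleaner step). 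The reconciliation you flag as the main obstacle --- the mismatch between this quadratic root and the linear water-level form $\sum_i\bigl[\lambda - (\sigma_{N_{Wi}}^2 - \sigma_{N_{Mi}}^2)\bigr]^{+} = P$ appearing in the theorem statement --- is not resolved in the paper either; its proof stops at the quadratic expression and the two Kuhn--Tucker conditions and defers the water-level parameterization to \cite{zhangli}, so your choice to cite that reference for this final step matches what the paper actually does.
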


\begin{proof}

We begin with the expression for the sum capacity of the channels
between the agents and the CSO. Maximization of the sum capacity
subject to the constraint on the total power can be solved using
Lagrange multiplier method. Writing the Lagrangian functional (J) as

\begin{eqnarray}
&& J(P_1,\ldots, P_N) = \sum_{i=1}^{N} \frac{1}{2}\log\left( 1+
\frac{P_i}{\sigma_{N_{Wi}}^2}\right) \\ \nonumber
&-& \sum_{i=1}^{N} \frac{1}{2}\log\left( 1+
\frac{P_i}{\sigma_{N_{Mi}}^2}\right) + \lambda(P - \sum_{i=1}^{N} P_i)
\end{eqnarray}

Differentiating this functional w.r.t. $P_i$, setting the
resultant expression equal to zero, and manipulating the equation,
we get
\begin{equation}
(P_i + \sigma_{N_{Mi}}^2)(P_i + \sigma_{N_{Wi}}^2) = \left(\frac{\sigma_{N_{Wi}}^2 -
\sigma_{N_{Mi}}^2}{2\lambda} \right).
\end{equation}
Applying Kuhn-Tucker conditions \cite{cover2} to make sure that $P_i > 0$, gives
us the first condition for power allocation strategy, which is
$N_{Wi} > N_{Mi}$. Continuing further and finding the roots of the
quadratic equation, results in the following condition:
\begin{equation}
P_i = \frac{1}{2} \left(\sqrt{N_{\triangle}^2 +
2\frac{N_{\triangle}}{\lambda}} - N_{\sum} \right)
\end{equation}
where $N_{\triangle}=\sigma_{N_{Wi}}^2 - \sigma_{N_{Mi}}^2$ and $N_{\sum}=\sigma_{N_{Wi}}^2 +
\sigma_{N_{Mi}}^2$. Applying Kuhn-Tucker condition again, we get the second
condition for power allocation strategy, which is,
\begin{equation}
\left( \frac{1}{\sigma_{N_{Mi}}^2} - \frac{1}{\sigma_{N_{Wi}}^2} \right) > 2 \lambda.
\end{equation}
\end{proof}

This theorem suggests that only those channels for which $(\sigma_{N_{Wi}}^2 - \sigma_{N_{Mi}}^2) > 0 $ need to be allocated power. If the threshold $\lambda$ is selected according to (\ref{eq:lambda}), then the power allocation strategy closely follows the classical waterfilling strategy. The difference is that in the CSO problem with parallel AWGN  channels, the channels are ranked based on $(\sigma_{N_{Wi}}^2 - \sigma_{N_{Mi}}^2)$. The CSO needs to communicate with its agents only on those channels which have an advantage over eavesdroppers and use the strategy described in Theorem 1 to invest on its resources.

\subsection{CSO Problem with Fading Downlink Channels}

A fading channel  between the CSO and any agent $i$ can be described \cite{zhangli} as
\begin{equation}
Y_{i}(j) = g_{Mi}(j)X_{i}(j) + N_{Mi}(j)
\end{equation}
where $X_{i}(j)$ and $Y_i(j)$ are the input and the output of the
$i^{th}$ channel during the time interval $j$ and $N_{Mi}(j)$ represents the noise on the same
channel. This is shown in Fig. \ref{fig3}. Similarly, the eavesdropper's channel can be modeled as
\begin{equation}
Z_{i}(j) = g_{Wi}(j)X_{i}(j) + N_{Wi}(j)
\end{equation}
where $Z_i(j)$ represents the output of the eavesdropper's channel during the interval $j$, and
$N_{Wi}(j)$ denotes noise on the eavesdropper's channel. The channel fading on the $i^{th}$ main channel and the $i^{th}$ eavesdropper's channel are represented by $g_{Mi}(j)$ and $g_{Wi}(j)$ which are zero-mean white Gaussian random variables, $i.e.$ $g_{Mi}(j) ~\scriptsize{\sim}~ N(0,\sigma_{g_{Mi}(j)}^2)$, $g_{Wi}(j) ~\scriptsize{\sim}~ N(0,\sigma_{g_{Wi}(j)}^2)$. Let us assume that $N_{Mi}(j)$ as well as $N_{Wi}(j)$ represent i.i.d. additive Gaussian noise with unit variance. For simplicity of description, we drop the index $(j)$ for time interval from here onwards.

We assume that each channel uses {\it{unit power}} for transmission as in \cite{liang1}, so we can denote the fading power gains corresponding to the two channels by $a_i =\sigma_{g_{Mi}}^2$ and $b_i=\sigma_{g_{Wi}}^2$ respectively and let $\gamma_i=(a_i,b_i)$ denote the channel state information (CSI). Let $P(\gamma_i)$ represent the power allocation for $i^{th}$ channel. The average power ($\hat{P}$) constraint for channel $i$ is given by $E[P(\gamma_i)] \leq \hat{P}$.

\begin{lem}
In a CSO model with fading downlink channels, if the channel state information is known, the
secrecy capacity $C_s^{(i,down,F)}$ of the $i^{th}$ downlink channel between the CSO and agent $i$ is given by:
\label{fading-lem2}

\begin{equation}
C_s^{(i,down,F)} = \underset{P(\gamma_i):E[P(\gamma_i)] =\hat{P}}{\max} E_{\gamma_i}[C(\gamma_i,P(\gamma_i)]
\end{equation}
where $C(\gamma_i,P(\gamma_i))$ = $\frac{1}{2}[\log(1+P(\gamma_i)a_i) - \log(1+P(\gamma_i)b_i)]$.
\end{lem}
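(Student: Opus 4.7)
The plan is to view the fading channel with known CSI as a continuum of parallel Gaussian wiretap sub-channels indexed by the channel state $\gamma_i=(a_i,b_i)$, and then apply Lemma 1 together with the parallel-channel argument already used in Lemma 8.

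First, I would condition on a particular realization of the CSI. Given $\gamma_i$, the pair $(Y_i,Z_i)$ forms a degraded (or not) Gaussian wiretap channel with effective SNRs $P(\gamma_i)a_i$ at the legitimate receiver and $P(\gamma_i)b_i$ at the eavesdropper. By Lemma 1 specialized to Gaussian inputs (the classical Leung-Ho--Cheong result \cite{cheong1}), the largest achievable instantaneous secrecy rate with power $P(\gamma_i)$ is exactly
\begin{equation*}
C(\gamma_i,P(\gamma_i)) = \tfrac{1}{2}\bigl[\log(1+P(\gamma_i)a_i) - \log(1+P(\gamma_i)b_i)\bigr],
\end{equation*}
where the maximizing auxiliary choice is $U_i=X_i\sim\mathcal{N}(0,P(\gamma_i))$; when $b_i\geq a_i$ one sets $P(\gamma_i)=0$, so the formula is implicitly nonnegative once the power allocation is chosen optimally.

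Second, I would combine the per-state Gaussian wiretap codes into a single code for the fading channel. Because the fading states are i.i.d.\ across time and the CSI is known, coding across a long block of channel uses yields a collection of effectively parallel independent Gaussian wiretap sub-channels, as in the setting of Lemma 8. The parallel-channel secrecy result of \cite{liang1} then guarantees that the achievable secrecy rate is the expectation over $\gamma_i$ of the per-state secrecy rate, subject to the average power constraint $E[P(\gamma_i)]=\hat{P}$. Maximizing over power-allocation policies $P(\cdot)$ gives the right-hand side of the lemma as an achievable rate.

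For the converse, I would use standard Fano-style bounding on the equivocation and mutual information terms appearing in Definition 1, conditioning on the CSI as genie side information at both the legitimate receiver and the eavesdropper (this only enlarges the secrecy capacity, so the bound remains valid); concavity of $\tfrac12\log(1+Px)$ in $P$ together with Jensen's inequality then shows no policy can exceed $E_{\gamma_i}[C(\gamma_i,P(\gamma_i))]$. The main obstacle is handling the average (rather than per-state) power constraint cleanly in the converse, since one must justify that the optimal policy is obtained by a waterfilling-like allocation over the fading states rather than by using peak power in each state; this is exactly the issue resolved in \cite{zhangli} and invoked in the statement, so I would appeal to that argument rather than re-derive it.
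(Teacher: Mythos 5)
Your proposal is correct and follows essentially the same route as the paper, whose entire proof is a one-line citation to the fading wiretap results of \cite{gopala1,zhangli}; your sketch merely spells out the standard argument underlying those references (per-state Gaussian wiretap capacity, ergodic averaging over the fading states as parallel sub-channels in the spirit of \cite{liang1}, and a Jensen/Fano converse), and defers the delicate average-power portion of the converse to \cite{zhangli} exactly as the paper does. One small caution: supplying the eavesdropper with CSI does not enlarge the secrecy capacity (it can only reduce the equivocation), but since the lemma's setting already assumes the channel state information is known to all parties, no genie argument is needed there and your proof stands.
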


\begin{proof} The proof for this lemma is based on the secrecy capacity of independent fading channels that can be found in \cite{gopala1,zhangli}. \end{proof}

\begin{theo}: In a CSO model with fading downlink channels, if the channel state information is known to all parties, the power allocation strategy that achieves
the secrecy capacity described in Lemma \ref{fading-lem2}
satisfies the following constraints \cite{zhangli}.
\begin{eqnarray}
\label{eq:fading}
P_i(a_i,b_i) &>& 0 ~if~ a_i > b_i \\ \nonumber  &=& 0 ~otherwise
\\ \nonumber
P(a_i,b_i) &=& \frac{1}{2} \left(\sqrt{N_{\triangle}^2 +
2\frac{N_{\triangle}}{\lambda}} - N_{\sum} \right)
\end{eqnarray}
where $N_{\triangle}=(\frac{1}{a_i} - \frac{1}{b_i})$ and $N_{\sum}= (\frac{1}{a_i} + \frac{1}{b_i})$.
The threshold $\lambda$ is chosen such that the average power constraint $E[P(\gamma_i)] \leq \hat{P}$ is satisfied.
\end{theo}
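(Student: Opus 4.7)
The plan is to mirror the Lagrangian/KKT argument used in the proof of Theorem 1, adapting it to the fading setting in which the per-realization instantaneous secrecy rate $C(\gamma_i,P(\gamma_i)) = \tfrac{1}{2}[\log(1+P(\gamma_i)a_i)-\log(1+P(\gamma_i)b_i)]$ comes from Lemma 9 and the power budget is enforced in expectation. Because the constraint is the single scalar inequality $E_{\gamma_i}[P(\gamma_i)] \le \hat P$, the problem decouples across realizations of $\gamma_i=(a_i,b_i)$: I attach a common multiplier $\lambda$ to the average-power constraint and then optimize pointwise for each fading state.

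Concretely, I would form
\begin{equation*}
J = E_{\gamma_i}\!\left[\tfrac{1}{2}\log(1+P(\gamma_i)a_i)-\tfrac{1}{2}\log(1+P(\gamma_i)b_i) - \lambda P(\gamma_i)\right] + \lambda \hat P,
\end{equation*}
and, since the integrand depends on $P(\gamma_i)$ only through its value at $\gamma_i$, impose stationarity at each realization. Differentiating yields
\begin{equation*}
\frac{a_i}{1+P a_i} - \frac{b_i}{1+P b_i} = 2\lambda,
\end{equation*}
which rearranges to the quadratic $(1+Pa_i)(1+Pb_i) = (a_i-b_i)/(2\lambda)$. Clearing the factor $a_ib_i$ and reintroducing the notation $N_{\triangle}=\frac{1}{b_i}-\frac{1}{a_i}$ (the effective noise-to-signal gap on the active set, analogous to $\sigma_{N_{Wi}}^2-\sigma_{N_{Mi}}^2$ in Theorem 1) and $N_{\sum}=\frac{1}{a_i}+\frac{1}{b_i}$, the positive root reproduces exactly the closed form claimed in (\ref{eq:fading}).

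Next, I would apply the Kuhn--Tucker conditions associated with the nonnegativity bound $P(\gamma_i)\ge 0$. Positivity of the radicand and of the stationary root holds precisely when $a_i>b_i$, so on any fading realization where the main channel is weaker than the eavesdropper's the complementary slackness forces $P(\gamma_i)=0$, giving the threshold rule in the first line of (\ref{eq:fading}). Finally, the multiplier $\lambda$ is pinned down by substituting the closed form back into $E_{\gamma_i}[P(\gamma_i)]=\hat P$, with the expectation taken over the active set $\{a_i>b_i\}$.

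The main obstacle I expect is not algebraic but structural: justifying that pointwise maximization of the integrand in $\gamma_i$ actually maximizes the expectation subject to a single expected-power constraint. This follows from a standard convex-duality argument because $C(\gamma_i,\cdot)$ is concave for each $\gamma_i$ on the active set and the Lagrangian decouples across $\gamma_i$, so once this is invoked the remaining steps parallel Theorem 1 verbatim. A secondary care point is the sign of $N_{\triangle}$: the theorem writes it as $\frac{1}{a_i}-\frac{1}{b_i}$, but on the active set $a_i>b_i$ the radicand $N_{\triangle}^2+2N_{\triangle}/\lambda$ is only guaranteed nonnegative with the opposite sign convention $N_{\triangle}=\frac{1}{b_i}-\frac{1}{a_i}$, which is the form that falls out of the quadratic; I would flag this and carry the proof with the positive convention.
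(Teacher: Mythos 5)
Your proposal is correct and follows exactly the route the paper intends: the paper's own ``proof'' of Theorem 2 is only a one-line remark that the argument is similar to Theorem 1 and can be found in the cited reference on parallel/fading secrecy channels, whereas you actually carry the Lagrangian/KKT computation out, including the pointwise decoupling of the expected-power constraint via a single multiplier, the stationarity condition $\frac{a_i}{1+Pa_i}-\frac{b_i}{1+Pb_i}=2\lambda$, the quadratic $(1+Pa_i)(1+Pb_i)=\frac{a_i-b_i}{2\lambda}$, and the positive root matching the stated closed form. Your flag on the sign convention is a genuine and correct catch: with the theorem's stated $N_{\triangle}=\frac{1}{a_i}-\frac{1}{b_i}$ the root is nonpositive on the active set $a_i>b_i$, and the form that actually falls out of the algebra (and that parallels $\sigma_{N_{Wi}}^2-\sigma_{N_{Mi}}^2$ in Theorem 1, with $1/a_i$ and $1/b_i$ playing the roles of effective noise variances) is $N_{\triangle}=\frac{1}{b_i}-\frac{1}{a_i}$. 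One small imprecision: strict positivity of the stationary root is not equivalent to $a_i>b_i$ alone; it requires $a_i-b_i>2\lambda$ (the fading analogue of Theorem 1's second Kuhn--Tucker condition $\frac{1}{\sigma_{N_{Mi}}^2}-\frac{1}{\sigma_{N_{Wi}}^2}>2\lambda$), so the threshold rule should be stated with that condition, with $P(\gamma_i)=0$ whenever $a_i-b_i\leq 2\lambda$; this does not affect the structure of your argument, which otherwise settles the theorem in more detail than the paper does.
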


\begin{proof}
The proof for Theorem 2 is similar to that of Theorem 1 and can be found in \cite{zhangli}. The interpretation of the result is also similar to that of Theorem 1. The difference between the two is in terms of the dynamic nature of the channels. In the later case, the CSO needs to assess the channels more frequently to find the channels on which it has an advantage over the eavesdroppers, and allocate its resources accordingly. 
\end{proof}

\section{CSO Problem with Cooperating Agents}
In this section, we extend the CSO framework to include cooperation among the agents and investigate the advantages of such cooperation. The intuitive idea is that the agents that are not capable  of secret communications with the CSO may help other agents to enhance the secrecy capacity.  For example, two agents can cooperate with each other to make one of their channels capable of secret sharing. This can happen through jamming the eavesdropper just like the way the defense team attacks the offense team in football or soccer game.


\subsection{Enhancing Secret Communication Through Cooperation}
\label{pairing}
Assume that all communication channels in the CSO model are fading channels as shown in Fig. \ref{fig3}.
Among the $N$ fading channels, we define the channels that satisfy the following condition:
\begin{equation}
\frac{a_i}{\sigma_{N_{Mi}}^2} > \frac{b_i}{\sigma_{N_{Wi}}^2} \label{Eq:condition}
\end{equation}
as qualified channels (capable of secret communication with the CSO), where $\sigma_{N_{Mi}}^2$ and $\sigma_{N_{Wi}}^2$ are the variances of $N_{Mi}$ and $N_{Wi}$. Here, $a_i$ and $b_i$ are equal to $\sigma_{g_{Mi}}^2$ and $\sigma_{g_{Wi}}^2$ respectively. The channels that do not satisfy condition (\ref{Eq:condition}) are considered disqualified.


In a scenario where agent cooperation is not allowed, no secret sharing will be possible on disqualified channels. However, when agents are allowed to cooperate, they can help one another in distracting the eavesdroppers through  \emph{jamming}. One approach to achieve such a cooperation is through appropriate pairing of agents. Suppose that there are $k$ number of disqualified channels and they are ordered according to their SNR values:

\begin{equation}
\frac{a_1}{\sigma_{N_{M1}}^2} < \frac{a_2}{\sigma_{N_{M2}}^2} < ...... < \frac{a_k}{\sigma_{N_{Mk}}^2}. \label{Eq:relation}
\end{equation}

There exist many strategies for pairing the cooperative nodes. One approach that  maximizes the number of {\it{secret communication channels}} is described below:
For each channel $i ~ \in ~(1, \ldots, k)$, find the smallest and unused channel $s ~ \in ~ (i+1, \ldots, k)$ such that $\frac{a_s}{\sigma_{N_{Ms}}^2} > \frac{b_{i}}{\sigma_{N_{Wi}}^2}$. If such $s$ exists, pair up agent $i$ with agent $s$. If such $s$ doesn't exist, move on to the next channel.

In the best case scenario, every agent will be successfully paired with another cooperating agent. This leads to qualifying  half of the originally disqualified channels.

The ratio $\frac{C_{sq,i}}{C_{q,i}}$ is used as a metric to evaluate the \emph{secrecy efficiency} of a qualified channel $i$, where $C_{sq,i} = \log_2(1+ \frac{a_i}{\sigma_{N_{Mi}}^2}) - \log_2(1+\frac{b_{i}}{\sigma_{N_{Wi}}^2})$ and $C_{q,i} = \log_2(1+ \frac{a_i}{\sigma_{N_{Mi}}^2})$. The denominator $C_{q,i}$ reflects the capacity of the channel $i$ in the absence of any eavesdropper.

When a disqualified channel $i$ is turned into a qualified channel with the help of a cooperating agent $h$, the eavesdropper listening to channel $i$ is completely jammed by  the cooperating agent $h$ provided $\frac{a_h}{\sigma_{N_{Mh}}^2} \geq \frac{b_i}{\sigma_{N_{Wi}}^2} \geq \frac{a_i}{\sigma_{N_{Mi}}^2}$. The \emph{secrecy efficiency} achieved through the pair of cooperating agents $(i,h)$ is given by $\frac{C_{sp,i}}{C_{p,i}}$, where $C_{sp,i} = \log_2(1+ \frac{a_i}{\sigma_{N_{Mi}}^2})$ and $C_{p,i} = \log_2(1+ \frac{a_i}{\sigma_{N_{Mi}}^2}) + \log_2(1+ \frac{a_h}{\sigma_{N_{Mh}}^2})$. The denominator $C_{p,i}$ denotes the sum of capacities of the channels corresponding to the two cooperating agents. Increasing the number of secret communication channels is equivalent to increasing the overall secrecy capacity of the CSO uplink channels.

\indent  Fig. \ref{fig4} illustrates the secrecy efficiency achieved through the proposed cooperation strategy. It shows a scenario in which there are three qualified channels and six disqualified channels. Half of the six disqualified channels are  turned into qualified channels through cooperation. Note that the maximum secrecy efficiency achieved through cooperation is 0.5.

\section{Summary}
This paper introduced the CSO problem and investigated two scenarios: one without cooperation  and the other with cooperation among agents. Secrecy capacity and power allocation strategies are discussed in both contexts. A strategy for enhancing secrecy capacity through cooperation is proposed and its optimality in terms of maximizing the secrecy capacity has been proved. Experimental results are provided to illustrate the benefits of the proposed strategy. Other  models of cooperation that limit the information leaked to eavesdropper's will be explored in future. 

\begin{appendices}
\section{Optimality of the Proposed Pairing Strategy}
In this appendix, we show that the proposed pairing strategy described in Section \ref{pairing} is optimal in terms of maximizing the number of secret communication channels.

\begin{proof}
Let $A_{i}$, $i ~ \in ~(1, \ldots, k)$ represent the SNR of the disqualified channel from agent $i$  to CSO, i.e. $A_{i} = \frac{a_i}{\sigma_{N_{Mi}}^2}$. Let $E_{i}$, $i ~ \in ~(1, \ldots, k)$ represent the SNR of the corresponding eavesdropper channel, i.e. $E_{i} = \frac{b_{i}}{\sigma_{N_{Wi}}^2}$. A channel $i$ becomes disqualified when $A_{i} < E_{i}$. Then, the node pairing problem is equivalent to choosing a cooperating node $j$ such that ($A_{j} > E_{i} > A_{i}$). Then, the node pairing problem is equivalent to finding as many ($i$, $j$) pairs as possible that satisfy this condition.

Let $S(i)$ denote the set of all nodes that can pair up with node $i$ and let $|S(i)|$ represent the size of this set.  Assume that $A_{i}$'s are sorted from the smallest to the largest. Since $A_{k}$ is the largest among all $A_{i}$s, $|S(k)|$ = 0. An example consisting of five agent-eavesdropper pairs is shown in Fig. \ref{fig5}. In this example, since $A_{5} > A_{4} > A_{3} > A_{2} > E_{1} > A_{1}$, the set for agent $1$ is $S(1) = \{2, 3, 4, 5\}$, and $|S(1)| = 4$. Also, it is easy to see that $S(2) = \{4, 5\}$, $S(3) = \{4, 5\}$, $S(4) = \{5\}$, $S(5) = \{\phi \}$, and $|S(2)| = 2$, $|S(3)| = 2$, $|S(4)| = 1$, $|S(5)| = 0$.

A reasonable assumption is that each agent is aware of how strong (channel SNR) his/her corresponding eavesdropper is. Further, since agents cooperate with each other, it is reasonable to assume that all agents share their eavesdropper's channel state information (CSI) with one another. Then, three different cases arise as discussed below.

\noindent {\it{Case 1}}: \begin{displaymath}
E_{1} \leq E_{2} \leq ...... \leq E_{k}
\end{displaymath}
In this case, all agents who can pair up with  $j$  can also pair up with $i$, for all $j > i$, since $E_{j} > E_{i}$. Therefore,
\begin{displaymath}
|S(1)| \geq |S(2)| \geq ...... \geq |S(k-1)|.
\end{displaymath}

If there exists an $x$, $1 \leq x \leq k-1$,  such that  $|S(x)| = 1$, and $|S(x-1)| > 1$. This implies that $x$ can pair up only with one agent which is $k$. Starting from agent $1$, if all agents before $x$ randomly pick one cooperating agent from their respective sets, then the probability ($Pr(picking ~ k)$) of  $k$ being picked by any one of them resulting in  $x$ being not able to pair up with any agent is given by:
\begin{displaymath}
\begin{array}{lllll}
Pr(picking ~k) = 1 - Pr(not ~picking ~k)\\
= 1 - \frac{C_{|S(1)|-1}^{1}}{C_{|S(1)|}^{1}} \cdot \frac{C_{|S(2)|-1}^{1}}{C_{|S(2)|}^{1}} \cdot ... \cdot \frac{C_{|S(x-1)|-1}^{1}}{C_{|S(x-1)|}^{1}}\\
\\
= 1 - \frac{|S(1)|-1}{|S(1)|} \cdot \frac{|S(2)|-1}{|S(2)|} \cdot ... \cdot \frac{|S(x-1)|-1}{|S(x-1)|}.
\end{array}
\end{displaymath}
This guarantees that $Pr(picking ~k) > 0$. On the other hand, if the proposed pairing strategy is used, then, all agents before $x$ pick the {\it{smallest}} $i$ to pair up with from their respective sets, then, $Pr(picking \quad k) = 0$. Thus, the proposed pairing strategy is optimal for {\it{case 1}}, because it guarantees that agent $x$ can pair up with someone even though he/she has the least number of options.

\noindent {\it{Case 2}}: \begin{displaymath}
E_{1} > E_{2} > ...... > E_{k}
\end{displaymath}
In this case, all agents who can pair up with $i$  can also pair up with $j$, for all $j > i$, since $E_{j} < E_{i}$. Therefore,
\begin{displaymath}
|S(1)| \leq |S(2)| \leq ...... \leq |S(k-1)|.
\end{displaymath}
In this case, agent $i$ is not affected by the choices made by others before $i$, $i ~ \in ~(1, \ldots, k)$. Hence, no pairing strategy is needed. However, if the proposed pairing strategy used, it achieves the highest \emph{secrecy efficiency}. For example, assume the set for agent $m$ is $S(m) = \{l, l+1, ... ,k \}$, where $A_{m} < A_{l} < A_{l+1} < ... < A_{k}$. If a node $d$ is randomly selected from $S(m)$ to pair up with $m$, then $A_{d} \geq A_{l}$ and the \emph{secrecy efficiency} is $\frac{A_{m}}{A_{m}+A_{d}}$. If the smallest agent is selected from $S(m)$ using the proposed strategy, then the \emph{secrecy efficiency} is $\frac{A_{m}}{A_{m}+A_{l}}$. Since $A_{d} \geq A_{l}$, $\frac{A_{m}}{A_{m}+A_{d}} \leq \frac{A_{m}}{A_{m}+A_{l}}$. Therefore, the proposed pairing strategy is optimal for {\it{case 2}}, because it achieves the highest \emph{secrecy efficiency}.

\noindent {\it{Case 3}}: There is no particular order of $E_{i}s$.

This is a random combination of {\it{case 1}} and {\it{case 2}}. Therefore, the proposed pairing strategy is an optimal in this case also.
\end{proof}
\end{appendices}

\singlespacing

\bibliographystyle{IEEETrans}
\bibliography{IEEEabrv,cso_bib}

%
%
%

\newpage

\begin{figure}
\begin{center}
\includegraphics[width=3in]{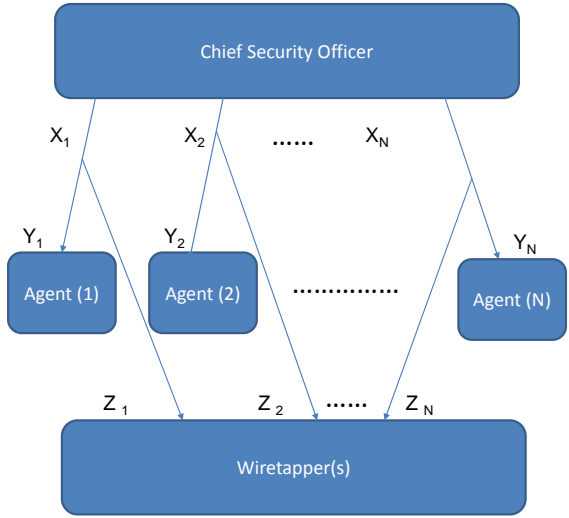}
\caption{The CSO framework models the secret information exchange between the CSO and the  agents.} \label{fig1}
\end{center}
\end{figure}

\begin{figure}
\begin{center}
\includegraphics[width=3in]{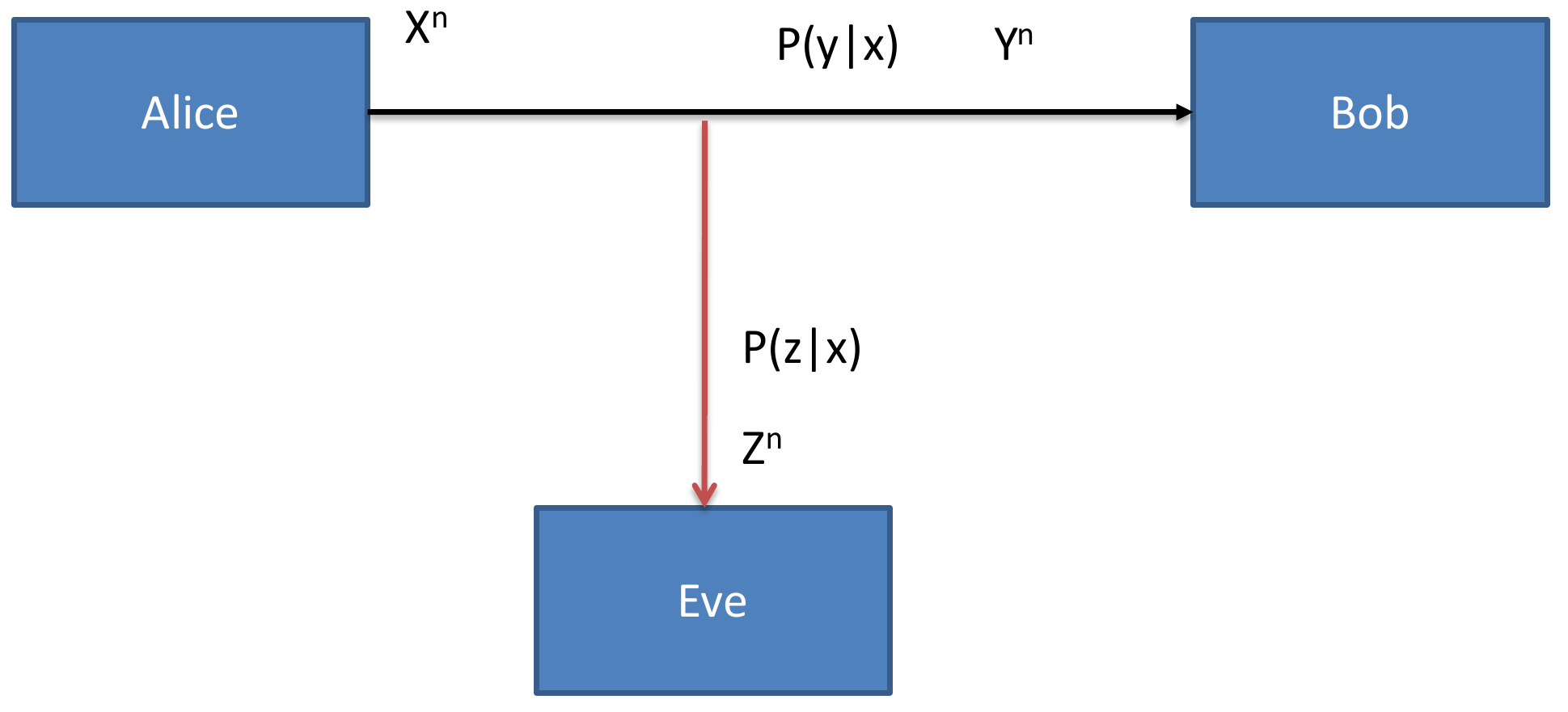}
\caption{Alice-Bob-Eve secrecy channel} \label{fone}
\end{center}
\end{figure}

\begin{figure}
\begin{center}
\includegraphics[width=3in]{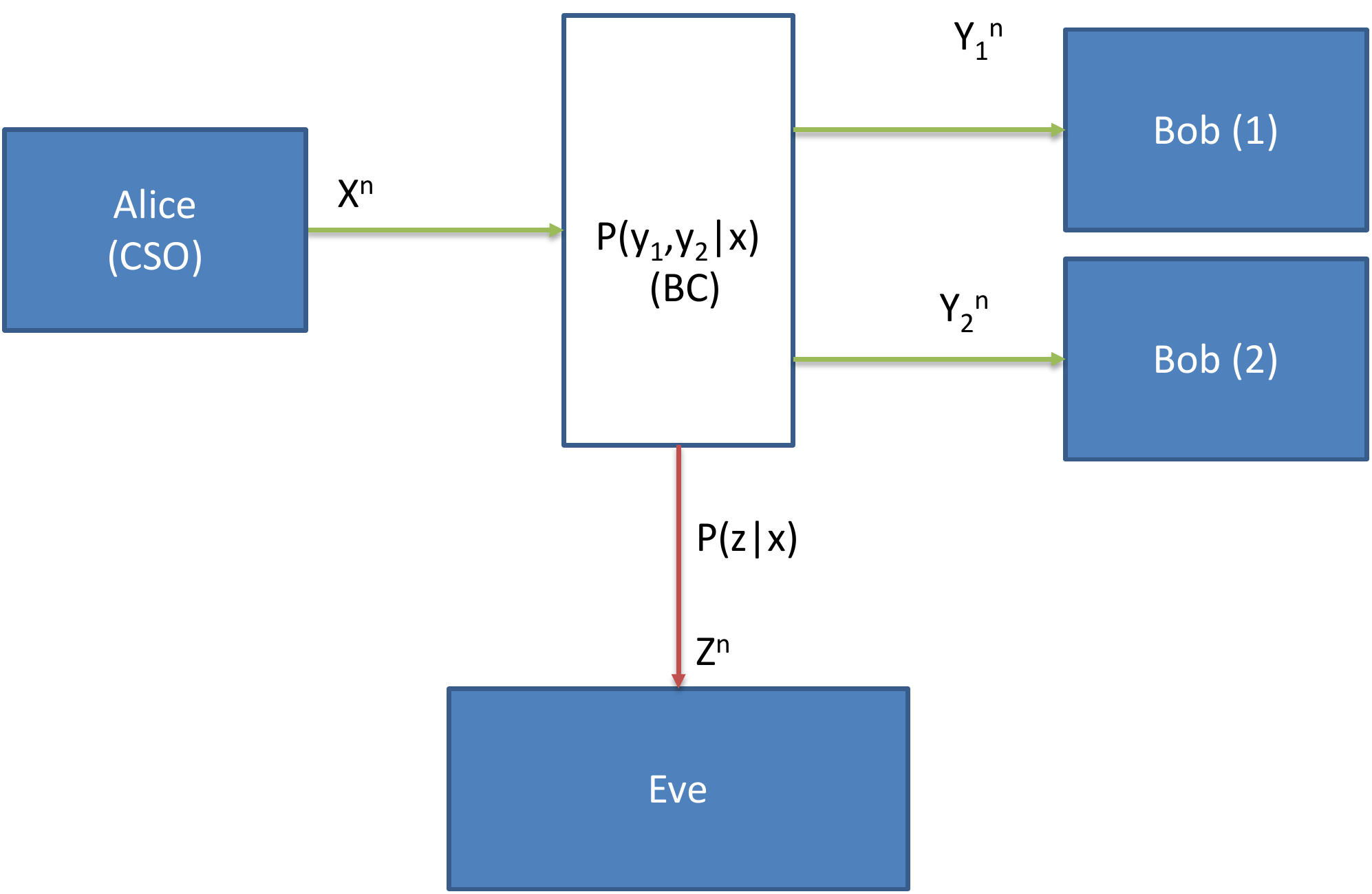}
\caption{CSO downlink broadcast channel (BC) with two agents} \label{ftwo}
\end{center}
\end{figure}

\begin{figure}
\begin{center}
\includegraphics[width=3in]{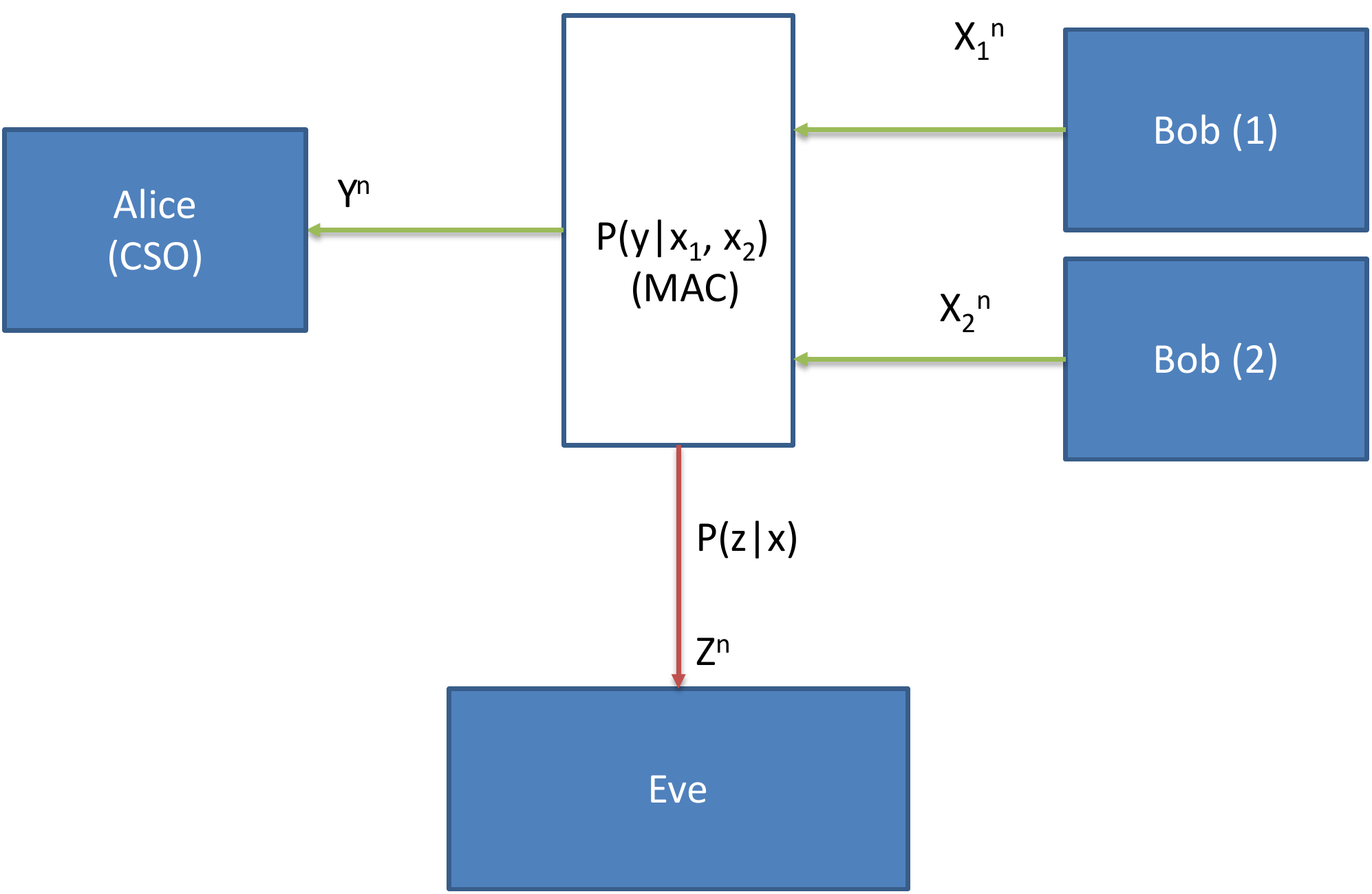}
\caption{CSO uplink multiple access channel (MAC) with two agents} \label{fthree}
\end{center}
\end{figure}

\begin{figure}[htbp]
\begin{center}
\includegraphics[width=3in]{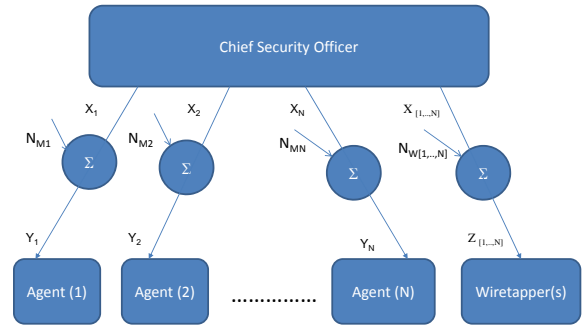}
\caption{The CSO problem with AWGN downlink channels.} \label{fig2}
\end{center}
\end{figure}

\begin{figure}[htbp]
\begin{center}
\includegraphics[width=3.5in]{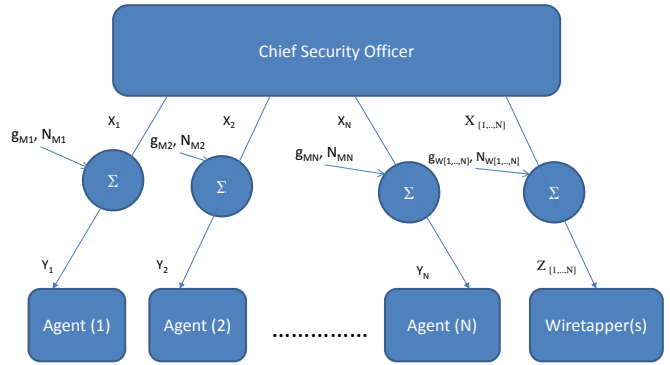}
\caption{The CSO problem with fading
downlink channels.} \label{fig3}
\end{center}
\end{figure}

\begin{figure}[htbp]
\begin{center}
\includegraphics[width=3.5in]{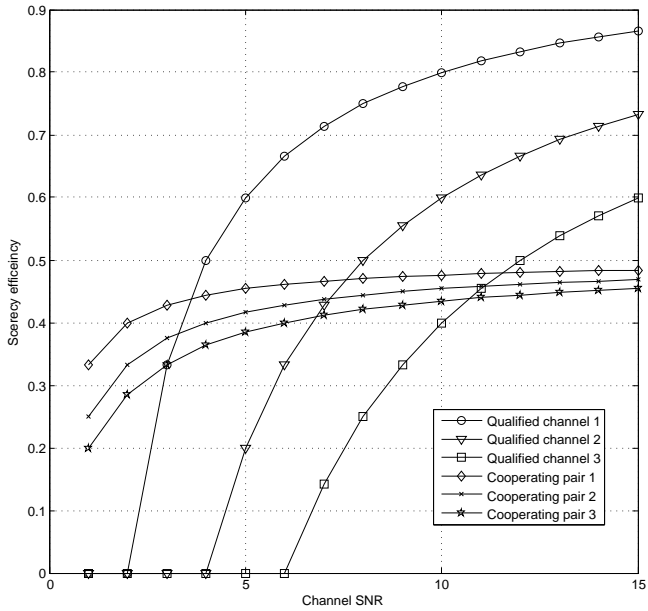}
\caption{Using cooperation among agents to increase the secrecy capacity of CSO fading downlink channel. The three plots corresponding to the qualified channels reflect the effect of eavesdroppers with low, medium, and strong  channel SNRs. The three plots corresponding to the cooperating pairs reflect the effect of cooperation from agents with low, medium, and strong  channel SNRs.} \label{fig4}
\end{center}
\end{figure}

\begin{figure}
\begin{center}
\includegraphics[width=3.5in]{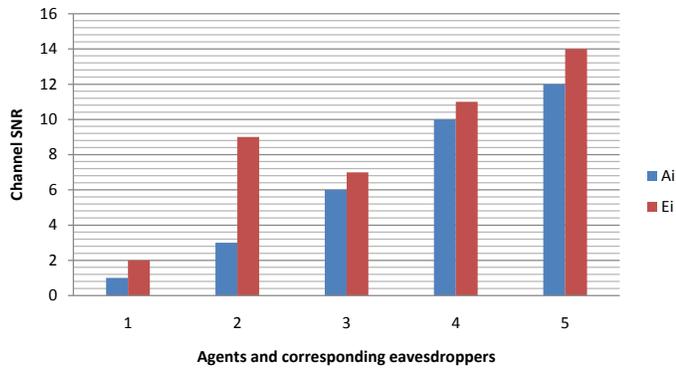}
\caption{An example of the relation between agent and its corresponding eavesdropper} \label{fig5}
\end{center}
\end{figure}

\end{document}